\documentclass[copyright,creativecommons]{eptcs}

\usepackage{longtable,booktabs}
\usepackage{mathtools}
\usepackage{mathrsfs}
\usepackage{graphicx}
\usepackage{amsmath,url,hyperref}
\usepackage{amsfonts}
\usepackage{amssymb}
\usepackage{xifthen}
\usepackage[capitalise]{cleveref}
\usepackage{paralist}
\usepackage{enumitem}
\usepackage[utf8]{inputenc} 
\usepackage{xcolor}    
\usepackage{amsthm}
\usepackage{bussproofs}
\usepackage{subcaption}
\usepackage{doi}
\newtheorem{definition}{Definition}
\newtheorem{lemma}{Lemma}

\newcounter{ruleName}
\newcommand{\ruleNameLbl}[1]{\refstepcounter{ruleName}(#1)\label{rule:#1}}
\newcounter{s}
\setcounter{s}{-1}

\newcommand{\atomsort}[1]{\ensuremath{\mathsf{#1}}}
\newcommand{\atomset}[2]{\ensuremath{{\mathbb{#1}}_{#2}}}
\newcommand{\atom}[1]{\ensuremath{#1}}

\newcommand{\datasort}[1]{\ensuremath{\mathrm{#1}}}
\newcommand{\absort}[2]{\ensuremath{\ll\!\!#1\!\!\gg #2}}
\newcommand{\absterm}[2]{\ensuremath{\ll\!\!#1\!\!\gg #2}}
\newcommand{\dabsort}[3]{\ensuremath{\ll\!\!\ntyping{#1}{#2}\!\!\gg #3}}
\newcommand{\dabsterm}[3]{\ensuremath{\ll\!\!\ntyping{#1}{#2}\!\!\gg #3}}
\newcommand{\concr}[2]{#1[#2]}
\newcommand{\concrVect}[2]{#1\vect{[#2]}}

\newcommand{\vect}[1]{\ensuremath{\overline{#1}}}
\newcommand{\simpleconstrdecl}[3]{\ensuremath{#1 : #2 \rightarrow\ #3}}

\newcommand{\depconstrdecl}[3]{\ensuremath{#1 : (#2) \rightarrow\ #3}}
\newcommand{\sconstr}[2]{\ensuremath{#1\,#2}}
\newcommand{\typing}[2]{\ensuremath{#1 : #2}}
\newcommand{\ntyping}[2]{\ensuremath{#1{:}#2}}
\newcommand{\fresh}[2]{#1 \# #2}

\newcommand{\sort}{\ensuremath{\mathsf{sort}}}
\newcommand{\data}{\ensuremath{\mathsf{data}}}

\newcommand{\sortconstrdecl}[2]{\ensuremath{#1 : #2 \rightarrow\ \data}}
\newcommand{\fconstrdecl}[4]{\ensuremath{#1 : #2 \rightarrow\ #3\ ;\ #4}}


\newcommand{\typeAndFresh}[3]{#1\colon #2\,;\,#3}
\newlength{\jdgsep}
\setlength{\jdgsep}{0.3em}

\newcommand{\emptyTel}{\cdot}
\newcommand{\consTel}[3]{#1,(\typing{#2}{#3})}
\newcommand{\emptyCtx}{\cdot}
\newcommand{\consCtx}[3]{#1,(\typing{#2}{#3})}

\newcommand{\lookupConstSig}[3]{#2(#1) = #3}

\newcommand{\sigjdgmnt}[1]{\ensuremath{\vdash #1\hspace{\jdgsep} \text{sig-ok}}}
\newcommand{\teljdgmntsig}[3]{\ensuremath{#1\vdash_{#2} #3 \hspace{\jdgsep} \text{tel-ok}}}
\newcommand{\telfitsjdgmntsig}[5]{\ensuremath{#1;#2\vdash_{#3} #4 \text{ fits } #5}}
\newcommand{\sortjdgmntsig}[5]{\ensuremath{#1;#2\vdash_{#4}  #5 \ \ \sort}}

\newcommand{\clsortjdgmntsig}[4]{\ensuremath{#1;\emptyCtx\vdash_{#3} #4 \ \ \sort}}
\newcommand{\ctxjdgmntsig}[4]{\ensuremath{#1\vdash_{#3} #4 \hspace{\jdgsep} \text{ctx-ok}}}

\newcommand{\tyjdgmntsig}[6]{\ensuremath{#1;#2\vdash_{#4} \typing{#5}{#6}}}
\newcommand{\tyjdgmnt}[5]{\ensuremath{#1;#2\vdash \typing{#4}{#5}}}
\newcommand{\tyjdgmnts}[2]{\ensuremath{\vdash \typing{#1}{#2}}}

\newcommand{\SigEmpty}{\ensuremath{\langle \rangle}}
\newcommand{\SigConsSort}[4]{\ensuremath{#1 , \langle %
    \typeAndFresh{#2}{#3\to\data}{#4} \rangle}}
\newcommand{\SigConsFun}[5]{\ensuremath{#1 , \langle %
    \typeAndFresh{#2}{#3\to #4}{#5} \rangle}}



\newcommand{\cnst}[1]{\textsl{#1}}

\newcommand{\form}{\texttt{Form}}
\newcommand{\prf}{\ensuremath{\mathcal{D}}}
\newcommand{\impl}{\ensuremath{\supset}}

\newcommand{\name}{\textsl{V}}
\newcommand{\ofType}{\colon}
\newcommand {\deffont}  [1] {\emph{#1}}
\newcommand {\id}       [0] {\mathsf{id}}
\newcommand {\act}      [0] {\cdot}

\newcommand {\prmAct}   [2] {#1 \act #2}
\newcommand{\aSub}[3]{\ensuremath{[#1 \mapsto #2]#3}}

\newcommand {\aSubAct} [2] {#1 \, #2}

\newcommand{\varTrmD} [3] {\ensuremath{#3}} 

\newcommand {\absTrmD}  [3] {\absterm{#1 \ofType #2} #3}
\newcommand {\appTrm}  [2] {\sconstr{#1}{#2}}
\newcommand {\absTypD}  [3] {\absort{#1 \ofType #2}{#3}}

\newcommand {\susp}    [2] {\ifthenelse{\isempty{#1}}
                                       {\ifthenelse{\isempty{#2}}{}{#2}}
                                       {\ifthenelse{\isempty{#2}}{#1}{#1  | #2}}}
\newcommand {\prmCat}   [2] {#1 \mathrel{\circ} #2}

\newcommand {\cent}     [0] {\vdash}
\newcommand {\hsh}      [0] {\#}
\newcommand {\frsh}    [0] {\mathrel{\hsh}}

\newcommand {\dom}      [1] {\mathsf{dom}(#1)}

\newcommand {\trans}    [2] {(#1 \ #2)}

\newcommand {\al}       [0] {{\approx_\alpha}}
\newcommand {\aleq}     [0] {\mathrel{\al}}

\newcommand {\joinTsubs} [2] {#1 \circ #2}

\newcommand{\frshRuleName}[1]{\ensuremath{(\mathit{#1})^\hsh}}
\newcommand{\alphaRuleName}[1]{\ensuremath{(\mathit{#1})^\alpha}}

\newcommand{\concat}{+\kern-0.8em+}

\newcommand{\Trm}{\texttt{Term}}
\newcommand{\Frm}{\texttt{Form}}
\newcommand{\enc}{\texttt{enc}}
\newcommand{\dec}{\texttt{dec}}
\newcommand{\deH}{\ensuremath{\delta\enc_H}}
\newcommand{\dta}{\texttt{data}}
\newcommand{\LATy}{\ensuremath{\Lambda^{\rightarrow}\texttt{Type}}}

\newcommand{\LTrm}{\ensuremath{\Lambda\texttt{Term}}}

\begin{document}
\title{Dependently Sorted Nominal Signatures}
\author{Maribel Fernández\institute{King's College London, UK} 
\and Miguel Pagano\institute{Univ. Nac. Córdoba, Argentina}
\and Nora Szasz \qquad Álvaro Tasistro\institute{Universidad ORT Uruguay}
}

\def\titlerunning{}
\def\authorrunning{M. Fernández, M. Pagano, N. Szasz, A. Tasistro}
\def\copyrightholders{Fernández, Pagano, Szasz, Tasistro}

\maketitle
\begin{abstract}
  We investigate an extension of 
  nominal many-sorted signatures in which abstraction has a form of instantiation, called generalised concretion, as elimination operator (similarly to lambda-calculi). 
  Expressions are then classified using a system of sorts and sort families that respects alpha-conversion (similarly to dependently-typed lambda-calculi) but
  not allowing names to carry abstraction sorts, thus constituting a first-order dependent sort system. 
  The system can represent forms of judgement and rules of inference of several interesting calculi. 
  We present rules and properties of the system as well as experiments of representation, and discuss how it constitutes a basis on which to build a type theory where raw expressions with alpha-equivalence are given a completely formal treatment.

\noindent
  Keywords: {Nominal Terms; Logical Frameworks; Dependent Types.}
\end{abstract}
\section{Introduction}

We present a generalisation of Pitts' nominal many-sorted signatures~\cite{pitts:alpsri}, where sorts can now depend on terms, yielding a \emph{dependently sorted system} that inherits the distinctive first-order algebraic flavour of nominal signatures.
We show that this system can serve as a basis for a logical framework.

Nominal logic programming languages such as $\alpha$-Prolog~\cite{CheneyJ:alpp} provide support for the specification of data structures that include bound names and for the formalisation of their properties. Resolution can be used to prove properties. However, there is no type distinction between solvable/unsolvable goals.
Gabbay et al~\cite{gabbay:capasn-jv,gabbay:oneaah-jv,gabbay:curhid} proposed alternative formulations of nominal systems with meta-variables that can be used to represent schematic proofs.
These systems, termed one-and-a-halfth-level or two-level calculi, have type systems closer to simply typed lambda calculus.
So, for example, one can introduce signatures for First-Order Logic (FOL) and the type system will ensure that the equality of two terms is well-typed only when predicated on terms of the sort corresponding to terms of the object language. On the other hand, again there is no type distinction between theorems and contradictions. 
To express that $\phi$ is a theorem, one has to construct not a term, but a derivation of $\phi = \top$ from the axioms of the theory. Notice that this equality corresponds to a different form of judgement. Besides the ample evidence of formalisation of mathematics and computer science in Higher-Order Logic (HOL) and Isabelle/HOL, Bordg, Paulson and Li~\cite{DBLP:journals/em/BordgPL22} have shown that 
sophisticated mathematical constructions can be formalised in simply typed lambda calculus.

With dependent types one can introduce the type (or \emph{sort} as we use in this article) $\form$ of well-formed formulas and a family of sorts $\prf{(\phi)}_{\{\phi \in \form\}}$ of proofs of theorems. Now both well-formedness and theoremhood of the object language are represented  as typing judgements in the meta-language.

In this paper we aim to demonstrate that a quite simple nominal language with dependent sorts can be used as a logical framework \cite{harper:fradl}.
The nominal foundation introduces a theory of expressions subject to alpha-conversion,
on top of which a system of dependent sorts is built that respects  alpha-equivalence. To this effect, we let names carry sorts of data ---but not abstraction (or "higher-order") sorts. This yields a limited form of computation associated to the elimination of abstractions (concretion), that can be solved at the level of syntactic meta-definitions.
Thus the language becomes a first-order dependent sorts system.
We have developed several examples showing its ability to represent binding structures: First Order Logic, Lambda Calculi and Higher-Order Logic. In this paper we show some of these, all treated employing a shallow-encoding strategy in which the object-language substitution is directly implemented with our concretion operation. 
In addition, a deep encoding of the pure Lambda Calculus is shown to allow the  formulation of an alpha-structural induction principle as described and justified in \cite{pitts:alpsri}. We deem this example as opening a path for extending the present system with full rules of computation (i.e. possibly recursive definitions) respecting alpha-conversion, thus yielding a dependent type theory \`a la Martin-L\"of with a nominal syntactic foundation providing the treatment of binding at the infrastructural level.

In \cref{section:related-work} we briefly discuss  related work, highlighting the systems that are closer to our approach and comparing with previous works on dependently sorted first-order signatures~\cite{Cartmell86,SterlingJ:algttu} as well as dependent type systems based on extensions of the $\lambda$-calculus with nominal features~\cite{CheneyJ:depntt,PittsA:deptta}.

\section{Preliminaries}

\subsection{Simple Nominal Signatures}

We start with a review of  nominal signatures~\cite{UrbanC:nomu-jv,pitts:alpsri}. 
Here and in the rest of the article we overline symbols or phrases to denote sequences of the species corresponding to the overlined entity.

A signature is a triple $(\mathcal{S}, \atomset{A}, \Sigma)$, where $\mathcal{S}$ is a set of \emph{basic sorts} including data sorts and name sorts; \atomset{A}{} is a family of countable infinite sets of names (\emph{atoms}), indexed by name sorts. Given $\mathcal{S}$ and $\atomset{A}{}$ the following grammar generates \emph{the sorts of terms}:
\[
  \gamma ::=\  \atomsort{a}\ |\ \datasort{s}\ |\ \absort{\atomsort{a}}{\vect{\gamma}}
\]
In this grammar $\atomsort{a}$ is a \emph{name} sort, $\datasort{s}$ is a \emph{data} sort, and $\absort{\atomsort{a}}{\vect{\gamma}}$ is an \emph{abstraction} sort. $\Sigma$ is a set of declarations of  (term-)\emph{constructors}, each with its \emph{arity}, i.e. $\simpleconstrdecl{\kappa}{\vect{\gamma}}{\datasort{s}}$.

The set of \emph{sorted} terms for the signature $(\mathcal{S}, \atomset{A}, \Sigma)$ is given by the following \textit{sorting} rules:\\

{\centering
    \AxiomC{}
    \LeftLabel{(atom)}
    \RightLabel{$\atom{a} \in \atomset{A}{\atomsort{a}}$}
    \bottomAlignProof
    \UnaryInfC{\typing{\atom{a}}{\atomsort{a}}}
    \DisplayProof
  \quad
    \AxiomC{$\typing{\vect{t}}{\vect{\gamma}}$}
    \LeftLabel{(constr)}
    \RightLabel{$\simpleconstrdecl{\kappa}{\vect{\gamma}}{\datasort{s}} \in \Sigma$}
        \bottomAlignProof
    \UnaryInfC{\typing{\sconstr{\kappa}{\vect{t}}}{\datasort{s}}}
    \DisplayProof
    \quad
    \AxiomC{$\typing{\vect{t}}{\vect{\gamma}}$}
    \LeftLabel{(abs)}
    \RightLabel{$\atom{a} \in
      \atomset{A}{\atomsort{a}}$}
    \bottomAlignProof
    \UnaryInfC{\typing{\absterm{\atom{a}}{\vect{t}}}{\absort{\atomsort{a}}{\overline{\gamma}}}}      
  \DisplayProof\\[4pt]
}
\noindent These terms are \emph{ground}; i.e., terms without meta-variables~\cite{UrbanC:nomu-jv,FairweatherE:deptnt}. 
As usual in a nominal setting one introduces the notion of freshness and identifies terms 
up to alpha-equivalence. The interesting case of alpha-equivalence is for abstractions, which
is defined by means of permutation of the abstracted atom by a fresh enough one. We omit
those  definitions here, but analogous notions appear in \cref{sect:syntax}.

\subsection{Dependent Sorts}

\subsubsection*{Sort Families}
\label{sec:sort-fam}
We begin by introducing \emph{families of sorts indexed over a sort}. To this end, we extend signatures to include \emph{sort constructors}, i.e., symbols declared as: $\sortconstrdecl{\mathcal{F}}{(\vect{\typing{X}{\gamma}})}$. 

In such a declaration, some sorts $\gamma_j$ might certainly depend on \textit{preceding} parameters $\typing{X_i}{\gamma_i}$ with $i < j$; this is typical of telescopic structures (e.g., contexts in dependently typed lambda calculi). 
Each $X$ in the telescope $\vect{\typing{X}{\gamma}}$ is called a \emph{parameter}; we will use  capital letters $P,Q,X,Y,\ldots$ to denote parameters, and when a parameter is not used in the successive sorts, we shall simply write an underscore, $\typing{\_}{\gamma_i}$.

These
declarations will as a whole take the place of the category of \emph{data} sorts
in the simple setting laid out above.
To declare non-dependent sorts we use the empty list for
$(\vect{\typing{X}{\gamma}})$ and simply write  $\typing{\mathcal{F}}{\data}$.
The generality now introduced allows us to have data sorts obtained by instantiating 
data sort constructors with terms: a sort constructor becomes a sort only when
fully applied to arguments of the appropriate sorts as declared in the signature.

Of course, we will still have term constructors with telescopic parameter structures and result sorts, i.e. declarations of the form \depconstrdecl{f}{\vect{\typing{X}{\gamma}}}{\delta},
with $\delta$ a data sort depending on the parameters $X_i$.

We shall also allow abstraction sorts; in our case, however, the sort of the body of 
an abstraction term might depend on the abstracted atom. Thus, abstraction sorts carry
the name of the atom being abstracted. 
We let atoms carry sorts of \emph{data}, indicating, in a manner to be explained later, the sort of expressions in which they can be instantiated or \emph{concretised}.

It is worth remarking that the system is intended to generate a language of \emph{ground} terms; i.e., terms with parameters are only used to build declarations, and the targeted judgements are all ground. 

\subsubsection*{Motivating Examples}
\label{sec:motee}
We begin with FOL as drawn from a natural deduction
style presentation.
\noindent We start by introducing data sorts for \emph{terms} and \emph{formulas}, as follows:
\begin{alignat*}{2}
    \Trm~: \dta & \qquad \text{and} \qquad &
  \Frm~: \dta
\end{alignat*}
We will for the moment let \Trm\ further unspecified and concentrate on formulas. Each formula has its potential derivations, which we will represent by introducing a family of sorts:

\[
\prf: \Frm \to \data
\]

\noindent A first simple formula is the one representing \emph{falsity},
usually called \emph{bottom} with its elimination rule:
\[ \typing{\perp}{\form} \quad \text{ and } \ \ \depconstrdecl{\perp_{\cnst{e}}}{\typing{\_}{\prf(\perp)},\,\typing{P}{\form}}{\prf(P)}\]
\noindent There are no direct (canonical) derivations of $\perp$, so
we do not introduce any term constructor with target sort $\prf(\perp)$.
The meaning of the declaration for $\perp_{\cnst{e}}$ is as follows: one can 
``instantiate'' the parameters corresponding to the derivation of falsity and 
the formula $P$ to get a proof term for the formula instantiating $P$. 
A vector of arguments \emph{fits} the telescope of a constructor if 
each term of the list has the appropriate sort. 
We now go on to consider \emph{implication}:
\[\depconstrdecl{\impl}{\typing{\_}{\form}, \typing{\_}{\form}}{\form}\]

\noindent Let us consider now its introduction rule; its main parameter must be a
derivation of the consequent (say $Q$) in which an assumption of a derivation of the 
antecedent (say $P$) has been discharged, i.e. made local.
This phenomenon of discharge/locality is naturally represented in our setting by abstraction. More specifically, the sort of the derivation in question will be
an abstraction sort \dabsort{h}{\prf(P)}{\prf(Q)}.
However, a further issue must be considered, namely that the consequent $Q$
can in principle be any term of sort \form. It could contain (free) atoms
whatsoever, in particular $h$. And this is a possibility we wish to exclude,
because it would result in undesired name capture when forming the abstraction
sort above. Hence the declarations in our system have the sort for the
constructor and also a set of freshness conditions, constituting what we call a \emph{freshness context}; this is how we indicate that 
the name (atom) $h$ is to be chosen fresh in the term instantiating $Q$.  
\ We then write:
\[
\fconstrdecl{\impl_{\cnst{i}}}{(\typing{P}{\form}, \typing{Q}{\form},
    \typing{\_}{\dabsort{h}{\prf(P)}{\prf(Q}))}}{\prf(\impl(P,
    Q))}{\fresh{h}{Q}}
\]
Notice that in the declaration of $\impl$ we omitted the freshness context
because it was empty; we will continue with this practice in the rest of the paper.

Let us now show how to actually prove $\varphi \impl \varphi$, for any formula $\varphi$. Indeed, $\prf(\impl(\varphi,\varphi))$ should be realisable for any term $\varphi$ of sort \form, of course without further assumptions. In an ordinary textbook presentation of FOL, a \emph{schema} of derivations would be provided, actually depicting as many concrete derivations as actual formulas there may be. 
We proceed in the same way here, i.e. by offering a \emph{schema} of judgements provable in our system, that correspond to the desired derivations. 
\emph{Internalising} such schemas requires passing from a language of ground terms to one containing meta-variables, a possibility yet to be developed. So let by now suppose we have a (ground) term $\varphi$ of sort \Frm.
Thus,  for some appropriate term ?0 we shall actually derive $\tyjdgmnts{\,?0}{\prf(\impl(\varphi,\varphi))}$.
A possible realisation of ?0 is via the term: $ \impl_{\cnst{i}}\, (\varphi, \varphi, \dabsterm{a}{\prf(\varphi)}{a})$.
Obviously, given any $\varphi$, it should be possible to choose $a$ so as to accomplish the freshness condition indicated above. This is to be checked by the system as explained later.

The elimination rule for implication is now straightforward:
\[\fconstrdecl{\impl_{\cnst{e}}}{(\typing{P}{\form}, \typing{Q}{\form},
    \typing{\_}{\prf(\impl(P, Q))}, \typing{\_}{\prf(P))}}{\prf(Q)}{}\]

\noindent We skip the rest of the usual propositional connectives and go straight to the universal quantifier:
\[
  \forall : (\_ : \absort{\_:\Trm}\Frm) \rightarrow \Frm
\]
This shows another methodological point regarding the encoding of object languages in this system: functions, as e.g. predicates, are uniformly represented as abstractions. We shall have an operation of \emph{atom substitution} and attach to each atom the sort of terms that may be substituted for it. The operation of atom substitution shall be actually subsumed into that of \emph{generalised concretion} (cf.~\cite{pitts:nomlfo-jv}), to be introduced later. An atom shall not in any case have an abstraction sort as its sort, thus making the notion of computation associated to generalised concretion very simple and of a first-order character.

Turning back to the universal quantifier, its introduction  rule requires a proof of a generic instance of the
predicate. Here comes the first use of the 
\emph{concretion} operator, in this case in its original, simple form 
that uses a specific fresh  name instead of the (originally
mute, unknown) abstracted atom:
\[
\forall_i  : (P : \absort{\_{:}\Trm}{\Frm},\_ :\absort{x{:}\Trm}{\prf(P[x])}) \to \prf(\forall (P)) \ ; \ \fresh{x}{P}
\]

\noindent Finally, we consider the elimination of $\mathsf{\forall}$, where $\concr{P}{T}$ below is  an instance of generalised concretion. It corresponds to a
simple concretion on a fresh atom followed by a substitution of this atom by $T$:
\[\\
\forall_e  : (P : \absort{\_{:}\Trm}{\Frm}, T : \Trm, \_ : \prf(\forall(P))) \to \prf(P[T]). 
\]

\section{Syntax}
\label{sect:syntax}
Building on the above, we propose the following syntax as a basis for a dependently sorted system.

\subsection{Grammar}
Consider a countably infinite set of name sorts, each one inhabited by a countably infinite set of names (\emph{atoms}). Let $a, \ b, \ c$, range over atoms.
Let also there be countably infinite sets of \emph{parameters} $X, \ Y, \ Z, \ldots \in \mathbb{X}$;  \emph{term constructors}, $f, \ g \ \ldots \in \mathbb{F}$; and \emph{sort constructors} $\mathcal{F}, \ \mathcal{G}, \ \ldots \in \mathbb{C}$.
Following Gabbay's \emph{permutative convention}~\cite{gabbay:nomtnl}:
$a$, $b$ range over \emph{distinct atoms}. The notation $\vect{t}$ refers to a 
vector of terms $t_0,\ldots,t_n$ with $n\geqslant 0$; given a term $t'$ and $\vect{t} = t_0,\ldots,t_n$, we use $\vect{t},t'$ to represent the vector $t_0,\ldots,t_n,t'$.

Sorts $\gamma$ and terms $t$ are generated by the grammar below. We will use $M$ to stand for either. 
\begin{flalign*}    
  \gamma ::= & \, \sconstr{\mathcal{F}}{\vect{t}} & \textit{}{data~sorts}\\
             & | \, \absort{a : \sconstr{\mathcal{F}}{\vect{t}}}{\gamma}& \textit{abstraction~sorts} \\[2pt]
  t ::= & \, \atom{a}       & atom       \\ 
        & | \, \varTrmD{\vartheta}{\pi}{X} \vect{\concr{}{t
        }} & \textit{\hspace{-4em}parameter with term concretions} \\ 
        & | \, \sconstr{f}{\vect{t}} & \textit{application} \\   
        & | \, \absterm{a : \sconstr{\mathcal{F}}{\vect{t}}}{t}  & \textit{abstraction}
\end{flalign*}

Sorts are built using sort constructors or abstractions and can depend
on terms, which can be atoms, parameters, application of a term constructor to a
tuple of terms, or the abstraction of an atom on a term.
We say an \emph{expression} (sort or term) is \emph{ground} iff it contains no parameters. When a parameter $X$ has no concretions, we omit the square brackets. As stated earlier, parameters are intended for declarations, as shown in the previous section, while sorting judgments (i.e., the language generated by the system to be given in the next section) involve only ground expressions.

\subsection{Operations and Relations}
We define the action of permutations on sorts and terms. Here, $\vect{\prmAct{\pi}{t}}$ denotes 
the vector $\prmAct{\pi}{t_0},\ldots,\prmAct{\pi}{t_n}$.

\begin{definition}[Permutation Action]
  \label{dependent:permutation_action}
  A \emph{permutation} $\pi$ is a bijection on the set of atoms, $\mathbb{A}$, with finite domain.
We represent permutations as lists of swappings  $(a\ b)$. The identity permutation is written $\id$.
  \begin{alignat*}{3}
\prmAct{\pi}{a} & \triangleq \pi(a)
&
\prmAct{\pi}{\absterm{a:\sconstr{\mathcal{F}}{\vect{s}}}{M}} & \triangleq \absterm{\pi(a):\sconstr{\mathcal{F}}{\vect{\pi\act s}}}{(\prmAct{\pi}{M})} 
\\
\prmAct{\pi}{(\varTrmD{\vartheta}{\pi'}{X} \vect{\concr{}{t
    }})} & \triangleq \varTrmD{\prmAct{\pi}{\vartheta}}{(\prmCat{\pi}{\pi'})}{X}\vect{\concr{}{\prmAct{\pi}{t}
}} \ \ &
\prmAct{\pi}{\sconstr{f}{\vect{t}} } & \triangleq \sconstr{f}{\vect{\prmAct{\pi}{t}}}
&
\prmAct{\pi}{\sconstr{\mathcal{F}}{\vect{t}}} & \triangleq \sconstr{\mathcal{F}}{\vect{\pi\act t}}
\end{alignat*}
\end{definition}

\noindent
To define alpha-equivalence, we first introduce the freshness relation. Call $a \frsh M$ a \deffont{freshness constraint}.

\begin{definition}[Freshness Relation]
\label{dependent:freshness} A \emph{freshness judgement} has the form  $\cent a \frsh M$.
 To derive freshness judgements we use the following rules. 
 A premise $ \cent a \frsh  \vect{t}$ is to be expanded as $ \cent a \frsh t_0$, $\ldots$, $\cent a\frsh t_n$.

\begin{center}
\bottomAlignProof
\AxiomC{}
\LeftLabel{\frshRuleName{atm}}
\UnaryInfC{$ \cent a \frsh b$}
\DisplayProof
\qquad
\AxiomC{$\cent a \frsh \vect{t}$}
\LeftLabel{\frshRuleName{cns}}
\bottomAlignProof
\UnaryInfC{$ \cent a \frsh \sconstr{\mathcal{F}}{\vect{t}}$}
\DisplayProof
\qquad
\AxiomC{$ \cent a \frsh \vect{t}$}
\LeftLabel{\frshRuleName{app}}
\bottomAlignProof
\UnaryInfC{$ \cent a \frsh \appTrm{f}{\vect{t}}$}
\DisplayProof
\\[2ex]
\AxiomC{$ \cent a \frsh \sconstr{\mathcal{F}}{\vect{t}}$}
\bottomAlignProof
\LeftLabel{\frshRuleName{ab_{aa}}}
\UnaryInfC{$ \cent a \frsh \absTrmD{a}{\sconstr{\mathcal{F}}{\vect{t}}}{M}$}
\DisplayProof
\quad
\AxiomC{$ \cent a \frsh M$}
\AxiomC{$\cent a \frsh \sconstr{\mathcal{F}}{\vect{t}}$}
\LeftLabel{\frshRuleName{ab_{ab}}}
\bottomAlignProof
\BinaryInfC{$\cent a \frsh \absTrmD{b}{\sconstr{\mathcal{F}}{\vect{t}}}{M}$}
\DisplayProof
\quad
\bottomAlignProof
\AxiomC{$ \cent a \frsh \vect{t}$}
\LeftLabel{\frshRuleName{var}}
\UnaryInfC{$ \cent a \frsh \varTrmD{\vartheta}{\pi}{X} \vect{\concr{}{t}}$}
\DisplayProof
\end{center}
\end{definition}
The main difference with respect to the freshness relation for standard nominal terms is the introduction of new rules $\frshRuleName{ab_{aa}}$,
$\frshRuleName{ab_{ab}}$, $\frshRuleName{cns}$, and $\frshRuleName{var}$, the rule for concretion, which checks freshness only in terms and not in the parameter. 
As will be commented again later, parameters stand for arbitrary \emph{closed} ground terms of the target language.

\begin{definition}[Alpha-equivalence Relation]
\label{dependent:alpha_equivalence}
An  \emph{$\alpha$-equivalence judgement} has the form $\cent M \aleq N$, where $M$ and $N$ are ground. We introduce now the rules defining this relation.
A premise $ \cent \vect{s} \aleq \vect{t}$, where \vect{s} and \vect{t} must always be of the same size, is to be expanded in an element-wise manner into premises $ \cent s_i \aleq t_i$.

\begin{center}
\bottomAlignProof
\AxiomC{}
\LeftLabel{\alphaRuleName{atm}}
\UnaryInfC{$\cent a \aleq a$}
\DisplayProof
\quad
\bottomAlignProof
\AxiomC{$ \cent \vect{s} \aleq \vect{t}$}
\LeftLabel{\alphaRuleName{cns}}
\UnaryInfC{$\cent \sconstr{\mathcal{F}}{\vect{s}} \aleq \sconstr{\mathcal{F}}{\vect{t}}$}
\DisplayProof
\quad
\bottomAlignProof
\AxiomC{$ \cent \vect{s} \aleq \vect{t}$}
\LeftLabel{\alphaRuleName{app}}
\UnaryInfC{$ \cent \appTrm{f}{\vect{s}} \aleq \appTrm{f}{\vect{t}}$}
\DisplayProof
\\[2ex]
\AxiomC{$ \cent \sconstr{\mathcal{F}}{\vect{t}} \aleq \sconstr{\mathcal{F}}{\vect{u}}$}%
\AxiomC{$ \cent M \aleq M'$}
\LeftLabel{\alphaRuleName{ab_{aa}}}
\bottomAlignProof
\BinaryInfC{$ \cent \absTypD{a}{\sconstr{\mathcal{F}}{\vect{t}}}{M} \aleq \absTypD{a}{\sconstr{\mathcal{F}}{\vect{u}}}{M'}$}
\DisplayProof
\\[2ex]
\AxiomC{$\cent \sconstr{\mathcal{F}}{\vect{t}} \aleq \sconstr{\mathcal{F}}{\vect{u}}$}
\AxiomC{$\cent M \aleq \prmAct{\trans{a}{b}}{M'}$}
\AxiomC{$\cent a \frsh M'$}
\LeftLabel{\alphaRuleName{ab_{ab}}}
\bottomAlignProof
\TrinaryInfC{$\cent \absTypD{a}{\sconstr{\mathcal{F}}{\vect{t}}}{M} \aleq \absTypD{b}{\sconstr{\mathcal{F}}{\vect{u}}}{M'}$}
\DisplayProof
\end{center}
\end{definition}
This definition of alpha-equivalence 
generalises  the standard one for nominal terms.
For simplicity, we omit a rule for parameters, which is not essential but would facilitate the writing of declarations.
\begin{lemma}[Equivariance]
\label{lem:equivariance-alpha-fresh}
\hfill
If \ $\cent a\,\#\,M$ then $\cent \pi\act a\ \#\ \pi\act M$.  Similarly if $\cent M\aleq N$ then $\cent \pi\act M\aleq\pi\act N$. 
\end{lemma}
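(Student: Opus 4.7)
The proof proceeds by two routine structural inductions, first on the derivation of $\cent a\,\#\,M$ and then on the derivation of $\cent M\aleq N$, each case invoking the same-named rule after the permutation has been pushed inside. The freshness half is handled first because the alpha-equivalence case for $\alphaRuleName{ab_{ab}}$ consumes a freshness premise and therefore depends on it.

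For freshness, the atom case $\frshRuleName{atm}$ is immediate because $\pi$ is a bijection, so from $a\neq b$ we get $\pi(a)\neq\pi(b)$ and can reapply $\frshRuleName{atm}$. The cases $\frshRuleName{cns}$, $\frshRuleName{app}$, and $\frshRuleName{var}$ follow from the induction hypothesis applied componentwise, noting that the permutation action commutes syntactically with these constructors (for the parameter case $\frshRuleName{var}$ the action only modifies the suspended permutation and the concretion arguments, leaving the argument list still wholly exposed to the IH). The two abstraction rules are equally direct: $\pi\act\absTrmD{a}{\sconstr{\mathcal{F}}{\vect{t}}}{M}=\absTrmD{\pi(a)}{\sconstr{\mathcal{F}}{\vect{\pi\act t}}}{\pi\act M}$, so the IH on the premises yields exactly what $\frshRuleName{ab_{aa}}$ or $\frshRuleName{ab_{ab}}$ requires of $\pi(a)$, $\pi(b)$, $\vect{\pi\act t}$, and $\pi\act M$.

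For alpha-equivalence the easy cases $\alphaRuleName{atm}$, $\alphaRuleName{cns}$, $\alphaRuleName{app}$, and $\alphaRuleName{ab_{aa}}$ mirror the freshness cases. The only case that requires genuine care is $\alphaRuleName{ab_{ab}}$, where the premise $\cent M\aleq\trans{a}{b}\act M'$ contains a swap inside the permutation action. To discharge it I would first establish two small auxiliary facts, both by straightforward structural induction on expressions: (i) the action is compositional, i.e.\ $\pi\act(\sigma\act M)=(\prmCat{\pi}{\sigma})\act M$; and (ii) the group-theoretic identity $\prmCat{\pi}{\trans{a}{b}}=\prmCat{\trans{\pi(a)}{\pi(b)}}{\pi}$, verified by evaluating both sides at an arbitrary atom $c$ in the three cases $c=a$, $c=b$, $c\notin\{a,b\}$. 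Combining (i) and (ii) gives $\pi\act(\trans{a}{b}\act M')=\trans{\pi(a)}{\pi(b)}\act(\pi\act M')$, so the IH on the $\aleq$-premise delivers exactly the middle premise needed to reapply $\alphaRuleName{ab_{ab}}$ at $\pi(a)$ and $\pi(b)$; the IH on the sort components supplies the first premise; and the (already proved) freshness half supplies $\cent\pi(a)\,\#\,\pi\act M'$.

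The anticipated obstacle is exactly the bookkeeping around auxiliary fact (ii): one must take care that both sides of the equation are honest compositions of bijections on $\mathbb{A}$ and that the equality one actually needs on terms is syntactic, not merely up to $\aleq$; having (i) available makes this clean. Once these two helper lemmas are in place, the two inductions close without further surprises.
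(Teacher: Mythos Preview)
Your proposal is correct and follows exactly the approach the paper intends: the paper's own proof is the single line ``Straightforward induction,'' and what you have written is precisely the unpacking of that induction, including the standard equivariance bookkeeping $\pi\act(\trans{a}{b}\act M')=\trans{\pi(a)}{\pi(b)}\act(\pi\act M')$ needed in the $\alphaRuleName{ab_{ab}}$ case. There is nothing to add or correct.
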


\begin{proof}
Straightforward induction.
\end{proof}

\noindent Freshness is stable by $\alpha$-equivalence:
\begin{lemma}
If $\cent a \frsh M$ and $\cent M \aleq N$ then $\cent a \frsh N$.
\end{lemma}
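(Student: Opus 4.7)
The plan is to proceed by induction on the derivation of $\cent M \aleq N$, performing inversion on the hypothesis $\cent a \frsh M$ in each case, applying the induction hypothesis to the premises, and then rebuilding a derivation of $\cent a \frsh N$ with the appropriate freshness rule. Since $M$ and $N$ are ground, the parameter case for freshness does not arise.

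The leaf and non-binding cases $\alphaRuleName{atm}$, $\alphaRuleName{cns}$, $\alphaRuleName{app}$ are immediate: the hypothesis $\cent a \frsh M$ is derived by the structurally matching rule $\frshRuleName{atm}$, $\frshRuleName{cns}$, or $\frshRuleName{app}$, so inversion yields freshness on the component atoms/terms, the IH applied componentwise gives freshness on the corresponding components of $N$, and the same structural rule reconstructs $\cent a \frsh N$. The case $\alphaRuleName{ab_{aa}}$, where $M = \absTypD{a'}{\sconstr{\mathcal{F}}{\vect{t}}}{M_0}$ and $N = \absTypD{a'}{\sconstr{\mathcal{F}}{\vect{u}}}{M_0'}$, splits on whether $a = a'$: if yes, apply the IH to the sort premise and conclude by $\frshRuleName{ab_{aa}}$; if no, inversion of $\cent a \frsh M$ must have used $\frshRuleName{ab_{ab}}$, and two applications of the IH followed by $\frshRuleName{ab_{ab}}$ give the goal.

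The interesting case is $\alphaRuleName{ab_{ab}}$, where $M = \absTypD{a'}{\sconstr{\mathcal{F}}{\vect{t}}}{M_0}$ and $N = \absTypD{b'}{\sconstr{\mathcal{F}}{\vect{u}}}{M_0'}$ with premises $\cent \sconstr{\mathcal{F}}{\vect{t}} \aleq \sconstr{\mathcal{F}}{\vect{u}}$, $\cent M_0 \aleq \prmAct{\trans{a'}{b'}}{M_0'}$, and $\cent a' \frsh M_0'$. I split into three subcases. If $a = a'$, inversion gives only $\cent a \frsh \sconstr{\mathcal{F}}{\vect{t}}$; the IH on the sort premise yields $\cent a \frsh \sconstr{\mathcal{F}}{\vect{u}}$, and combining this with the already-available side condition $\cent a' \frsh M_0'$ via $\frshRuleName{ab_{ab}}$ gives $\cent a \frsh N$ (applicable since $a = a' \neq b'$). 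If $a = b'$, then $a \neq a'$ so inversion gives both $\cent a \frsh M_0$ and $\cent a \frsh \sconstr{\mathcal{F}}{\vect{t}}$; the IH on the sort yields $\cent a \frsh \sconstr{\mathcal{F}}{\vect{u}}$, and $\frshRuleName{ab_{aa}}$ concludes (since $a = b'$).

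The main obstacle is the remaining subcase $a \notin \{a',b'\}$: inversion gives $\cent a \frsh M_0$ and $\cent a \frsh \sconstr{\mathcal{F}}{\vect{t}}$; the IH on the sort premise handles the sort part, but for the body I need $\cent a \frsh M_0'$ while the alpha-premise only relates $M_0$ to the \emph{swapped} $\prmAct{\trans{a'}{b'}}{M_0'}$. Here I invoke \cref{lem:equivariance-alpha-fresh}: applying the IH to $\cent M_0 \aleq \prmAct{\trans{a'}{b'}}{M_0'}$ first yields $\cent a \frsh \prmAct{\trans{a'}{b'}}{M_0'}$; equivariance then gives $\cent \prmAct{\trans{a'}{b'}}{a} \frsh \prmAct{\trans{a'}{b'}}{\prmAct{\trans{a'}{b'}}{M_0'}}$, and since $\trans{a'}{b'}$ fixes $a$ and the swap is its own inverse, this simplifies to $\cent a \frsh M_0'$. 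Then $\frshRuleName{ab_{ab}}$ closes the case.
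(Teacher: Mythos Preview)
Your argument is correct. The only difference from the paper is the choice of induction principle: the paper proceeds by induction on the derivation of $\cent a \frsh M$ (and then inverts the $\aleq$ hypothesis), whereas you induct on $\cent M \aleq N$ (and then invert the freshness hypothesis). Both routes require \cref{lem:equivariance-alpha-fresh} at the binder case with differing atoms, exactly as you identified. Your choice is arguably a bit smoother, since a single case split on the last $\aleq$ rule fixes the shape of both $M$ and $N$ simultaneously, whereas inducting on freshness still leaves a secondary inversion on $\aleq$ to recover the shape of $N$; conversely, the paper's route keeps the atom $a$ tied to the structure being analysed. Either way the case analysis and the use of equivariance are essentially the same.
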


\begin{proof}
By induction on the freshness relation. Use equivariance.
\end{proof}
\begin{lemma}
$\aleq$ is a congruence.
\end{lemma}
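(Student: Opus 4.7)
A congruence is an equivalence relation compatible with the term- and sort-forming operations. Compatibility with constructors is already built into the inference rules: $\alphaRuleName{cns}$, $\alphaRuleName{app}$ and $\alphaRuleName{ab_{aa}}$ are precisely the congruence rules for sort constructors, applications and same-atom abstractions. The task thus reduces to showing that $\aleq$ is reflexive, symmetric and transitive on ground expressions. The plan is to prove these three properties in that order, by induction on expressions for reflexivity and on derivations for the other two, using equivariance (\cref{lem:equivariance-alpha-fresh}) and the preceding freshness stability lemma throughout.

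Reflexivity is immediate by combining the inductive hypotheses with $\alphaRuleName{atm}$, $\alphaRuleName{cns}$, $\alphaRuleName{app}$ and $\alphaRuleName{ab_{aa}}$. For symmetry, the only non-trivial case is $\alphaRuleName{ab_{ab}}$: given $\cent M \aleq \prmAct{\trans{a}{b}}{M'}$ and $\cent a \frsh M'$, the inductive hypothesis yields $\cent \prmAct{\trans{a}{b}}{M'} \aleq M$, and since $\trans{a}{b}$ is self-inverse, equivariance gives $\cent M' \aleq \prmAct{\trans{a}{b}}{M}$. The required $\cent b \frsh M$ follows by first applying equivariance to $\cent a \frsh M'$ to obtain $\cent b \frsh \prmAct{\trans{a}{b}}{M'}$, and then invoking freshness stability along the inductive hypothesis.

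Transitivity is the delicate part, to be proved by a simultaneous induction on the two derivations. All cases reduce to routine combinations of the inductive hypotheses except when both derivations end in $\alphaRuleName{ab_{ab}}$, with bound atoms $a,b$ and $b,c$ respectively. When $a=c$, rule $\alphaRuleName{ab_{aa}}$ closes the goal after applying equivariance to one of the body premises. When $a$, $b$, $c$ are pairwise distinct, we instead build the target via $\alphaRuleName{ab_{ab}}$: the freshness $\cent a \frsh M_3$ follows from $\cent a \frsh M_2$ by equivariance (since $\trans{b}{c}$ fixes $a$) and freshness stability along $\cent M_2 \aleq \prmAct{\trans{b}{c}}{M_3}$, whereas the body premise $\cent M_1 \aleq \prmAct{\trans{a}{c}}{M_3}$ is obtained by applying equivariance by $\trans{a}{b}$ to the second body premise and then reconciling the composition $\trans{a}{b}\trans{b}{c}$ with $\trans{a}{c}$ via the elementary identity $\trans{a}{b}\trans{b}{c} = \trans{a}{c}\trans{a}{b}$. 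The mixed cases (one $\alphaRuleName{ab_{aa}}$ and one $\alphaRuleName{ab_{ab}}$) are specialisations of this argument. I expect this reconciliation step to be the main obstacle: it additionally requires an auxiliary lemma---provable by a short separate induction on ground expressions---stating that swapping two atoms both fresh in an expression yields an $\aleq$-equivalent expression, and the final bookkeeping of equivariance, freshness stability and inductive hypotheses must be carried out carefully.
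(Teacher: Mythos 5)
Your proposal is correct and follows essentially the same route as the paper, whose entire proof is the one-line ``Induction on the definition of $\aleq$'': you simply flesh out that induction (reflexivity, symmetry, transitivity case analysis on $\alphaRuleName{ab_{aa}}$/$\alphaRuleName{ab_{ab}}$, use of \cref{lem:equivariance-alpha-fresh} and freshness stability, plus the auxiliary fresh-swap lemma) in the standard way. The only point to watch is the well-foundedness of the transitivity induction when you re-apply transitivity after the $\trans{a}{b}\trans{b}{c}=\trans{a}{c}\trans{a}{b}$ reconciliation, which is handled routinely by measuring on term size (permutations preserve it) rather than literally on the pair of derivations.
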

\begin{proof}
    Induction on the definition of $\aleq$.
\end{proof}

\newcommand{\asubs}[2]{\ensuremath{[#1 \mapsto #2]}}

\begin{definition}[Atom Substitution]
\label{dependent:atom_substitution}
We write \asubs{a}{t} for the operation that substitutes the atom $a$ by the term $t$. This is defined on expressions as follows:
\begin{alignat*}{2}
\aSubAct{a}{\asubs{a}{t}} & \triangleq  t & \qquad
\aSubAct{b}{\asubs{a}{t}} & \triangleq  b \\
\aSubAct{(\appTrm{f}{\vect{s}})}{\asubs{a}{t}} & \triangleq  \appTrm{f}{(\vect{\aSubAct{s}{\asubs{a}{t}}})} &
\aSubAct{(\appTrm{\mathcal{F}}{\vect{s}})}{\asubs{a}{t}} & \triangleq  \appTrm{\mathcal{F}}{(\vect{\aSubAct{s}{\asubs{a}{t}}})} \\
\aSubAct{(\varTrmD{\asubs{a}{t}}{\pi}{X} \vect{\concr{}{t}})}{\asubs{a}{t'}} & \triangleq \varTrmD{(\joinTsubs{\vartheta'}{\vartheta})}{\pi}{X} \vect{\concr{}{\aSubAct{t}{\asubs{a}{t'}}}} &  & \\
\aSubAct{(\absTrmD{a}{\sconstr{\mathcal{F}}{\vect{t}}}{M})}{\asubs{a}{t}} & \triangleq \absTrmD{a}{\sconstr{\mathcal{F}}{\vect{\aSubAct{t}{\asubs{a}{t}}}}}{M}  &   \\
\aSubAct{(\absTrmD{b}{\sconstr{\mathcal{F}}{\vect{t}}}{M})}{\asubs{a}{t}} & \triangleq \absTrmD{c}{\sconstr{\mathcal{F}}{\vect{\aSubAct{t}{\asubs{a}{t}}}}}{\aSubAct{(\prmAct{\trans{b}{c}}{M})}{\asubs{a}{t}}}  &  & (\cent c \frsh M, \cent c \frsh t)  . 
\end{alignat*}
\end{definition}

Some explanations are in order: 
to avoid capturing unabstracted atoms,  when an atom substitution acts upon an abstraction or abstraction sort (last case above),
a suitable alpha-equivalent representative of the latter is first chosen.
Any implementation of this definition as a recursive function must accommodate a suitable mechanism for the generation of names; this is most easily achieved by the threading of global state throughout the function or by the use of a global choice function that returns the next available name.\\

\noindent
Atom substitutions work uniformly on alpha-equivalence classes.
\begin{lemma}
\label{dependent:aleq_sub}
If $\vdash M \aleq N$ and $\vdash t \aleq u$ then $\vdash M\asubs{a}{t} \aleq N\asubs{a}{u}$
\end{lemma}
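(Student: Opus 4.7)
The plan is to proceed by induction on the derivation of $\cent M \aleq N$, leveraging the congruence of $\aleq$, equivariance (Lemma~\ref{lem:equivariance-alpha-fresh}), and the stability of freshness under alpha-equivalence already established. The atom case is immediate: if the derivation is $\alphaRuleName{atm}$ with $M = N = a$, then both substitutions reduce to $t$ and $u$ and the conclusion is the hypothesis $\cent t \aleq u$; if $M = N = b \neq a$, both substitutions are trivial. The $\alphaRuleName{cns}$ and $\alphaRuleName{app}$ cases follow by applying the IH componentwise to the subterms and reassembling with the corresponding constructor or application, using the congruence of $\aleq$. In the abstraction cases the first premise about the sort annotation is dispatched uniformly by the IH, so the remaining work concerns the body.

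The abstraction case $\alphaRuleName{ab_{aa}}$, where both sides bind the same atom $b$, splits on whether $b = a$ or not. If $b = a$ the body is left untouched by the substitution (only the sort annotation is affected), so the conclusion follows from the body premise together with what has already been shown for the annotation. If $b \neq a$, the clauses of Definition~\ref{dependent:atom_substitution} pick fresh witnesses $c$ and $c'$ on the two sides and recurse on $(\prmAct{\trans{b}{c}}{M})\asubs{a}{t}$ and $(\prmAct{\trans{b}{c'}}{M'})\asubs{a}{u}$. I would first argue that one may choose $c = c'$ without loss of generality — this requires an auxiliary statement to the effect that the choice of fresh witness in Definition~\ref{dependent:atom_substitution} is irrelevant modulo $\aleq$. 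Once the witness is unified, the body premise $\cent M \aleq M'$ gives $\prmAct{\trans{b}{c}}{M} \aleq \prmAct{\trans{b}{c}}{M'}$ by equivariance, and the IH closes the goal.

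The main obstacle is the $\alphaRuleName{ab_{ab}}$ case, where the abstractions bind distinct atoms $b_1$ and $b_2$ and the premises include $\cent M \aleq \prmAct{\trans{b_1}{b_2}}{M'}$ together with $\cent b_1 \frsh M'$. After sub-case analysis on the relationship between $a$ and $\{b_1,b_2\}$, the generic situation leaves us to compare two abstractions of the shape $\absort{c_1:\ldots}{(\prmAct{\trans{b_1}{c_1}}{M})\asubs{a}{t}}$ and $\absort{c_2:\ldots}{(\prmAct{\trans{b_2}{c_2}}{M'})\asubs{a}{u}}$. Using the independence-of-witness fact I would choose a common $c$ fresh in $M,M',t,u$ and disjoint from $\{a,b_1,b_2\}$, reducing the goal, via rule $\alphaRuleName{ab_{aa}}$, to $\cent (\prmAct{\trans{b_1}{c}}{M})\asubs{a}{t} \aleq (\prmAct{\trans{b_2}{c}}{M'})\asubs{a}{u}$. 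To apply the IH here I expect to need a companion commutation lemma between atom substitution and permutation (essentially $(\prmAct{\pi}{M})\asubs{a}{t} \aleq \prmAct{\pi}{(M\asubs{\prmInv{\pi}(a)}{\prmAct{\prmInv{\pi}}{t}})}$ when the relevant atoms are disjoint from $a$), together with equivariance applied to the body premise and the freshness premise to identify $\prmAct{\trans{b_1}{c}}{M}$ with $\prmAct{\trans{b_2}{c}}{M'}$ up to $\aleq$. That commutation lemma is itself a routine induction, but it is the real hinge of the argument and the only genuinely delicate step.
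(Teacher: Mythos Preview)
Your proposal is correct and follows exactly the approach the paper takes: the paper's proof is the single line ``Induction on the derivation of $\vdash M \aleq N$'', and you have correctly fleshed out the case analysis, including the auxiliary independence-of-witness and permutation/substitution commutation facts that are indeed what make the abstraction cases go through. One small point: the commutation lemma you isolate for $\alphaRuleName{ab_{ab}}$ is also what you need in the $\alphaRuleName{ab_{aa}}$ case with $b\neq a$, since after applying equivariance the judgment $\prmAct{\trans{b}{c}}{M}\aleq\prmAct{\trans{b}{c}}{M'}$ is not itself a premise of the original derivation --- so either invoke the commutation lemma there too, or phrase the induction on derivation height so that the equivariance-transformed subderivation still falls under the IH.
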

\begin{proof}
Induction on the derivation of $\vdash M \aleq N$.
\end{proof}

A concretion $\concr{w}{t}$ is a partial operation: if $w$ is an abstraction $\absterm{a:\sconstr{\mathcal{F}}{\vect{s}}}{u}$, then its concretion to $t$ evaluates to the body of the abstraction, $u$, where the abstracted atom is substituted by $t$.
If $w$ is the parameter $X$ (possibly with other concretions suspended in it), the concretion remains ``suspended'' (until $X$ is instantiated). Under the sorting system of the next section, concretion of a parameter will be well-sorted  only if the parameter is of an (appropriate) abstraction sort.

\begin{definition}[Concretion]
Concretion is a partial operation:
\begin{alignat*}{2}
\concr{(\absterm{a:\sconstr{\mathcal{F}}{\vect{s}}}{u})}{t} & \triangleq \aSubAct{u}{\aSub{a}{t}{}}  
 & \qquad \qquad
 \concr{(\varTrmD{\vartheta}{\pi}{X} \vect{\concr{}{t}})}{t'} & \triangleq \varTrmD{\vartheta}{\pi}{X} \concr{}{\vect{t},t'}
\end{alignat*}
\end{definition}

\begin{definition}[Parameter Instantiation]
A \emph{parameter instantiation} is a finite mapping from parameters to terms, and it acts on expressions as just grafting (i.e., without a control of capture), subject to the condition that each parameter to be replaced is in the domain of the instantiation.
\end{definition}

\section{Sorting judgements}
\label{sec:sort-rules}
We use five forms of judgements: \begin{inparaenum}[1)]
\item Well-formedness of signature $\Sigma$, formally $\sigjdgmnt{\Sigma}$
(\cref{def:sig-rules});
\item Well-formedness of telescopes $\mathscr{T}$ under a valid signature,  \teljdgmntsig{}{\Sigma}{\mathscr{T}}(\cref{def:tel-form-rules});
\item Well-formedness of contexts of atoms (~\cref{def:ctxt-rules}), $\ctxjdgmntsig{\mathscr{T}}{}{\Sigma}{\Gamma}$;
\item Well-formedness of sorts (\cref{def:sort-rules}), $\sortjdgmntsig{\mathscr{T}}{\Gamma}{}{\Sigma}{\gamma}$; and
\item Well-sortedness of terms (\cref{def:term-rules}), $\tyjdgmntsig{\mathscr{T}}{\Gamma}{}{\Sigma}{t}{\gamma}$.
\end{inparaenum}

As indicated above, the sorts are either \emph{data sorts} or \emph{abstraction sorts}. Data sorts are introduced by \emph{sort constructors} $\mathcal{F}$, and these can only introduce data sorts, never an abstraction sort ---the latter being formed exclusively by the binder $\absort{\typing{\_}{\_}}{\_}$. 
Similarly, terms of the data sorts are formed by \emph{(term) constructors} $f$, and terms of abstraction sorts exclusively by the corresponding binder. 
\emph{Signatures} are sequences of \emph{declarations} of sort and term constructors. As already explained, a declaration specifies the sorts of the corresponding parameters and a freshness context.
These parameter declarations are called \emph{telescopes}. 
The word \emph{context} is reserved for \emph{atom contexts}, $\Gamma$,  necessary to  sort abstractions.

As already stated, the intention is that the system is used for generating well-formed \emph{ground} sorts and
terms. The rules given below define  well-formed scripts of declarations (i.e. signatures), which involve not only ground expressions but also expressions with parameters.

First, notice the use of \emph{freshness contexts} ($\Delta$) in declarations. They involve conditions of the form $a \frsh X$, where the atom $a$ is to appear bound in the declaration and $X$ is any parameter of the declaration.

This defines the side condition on well-formedness of the contexts $\Delta$.
The rules check the validity of the freshness conditions whenever a declaration is put into use, i.e. in rules (data) and (constr). There the constructor employed must be declared in the signature with a telescope $\mathscr{T'}$ and freshness context $\Delta$, as stated in the side condition. Then a fresh version of $\mathscr{T'}$, as well as of $\Delta$, are created by employing new atoms so as to avoid possible collisions with unabstracted atoms in the expression being checked. We call this new telescope $\mathscr{T}_{\#}'$, and the new context $\Delta_{\#}$. Then it is checked that the tuple $\vect{t}$ of arguments \emph{fits} the telescope $\mathscr{T}_{\#}'$ and at the same time the conditions in $\Delta_{\#}$ are satisfied, with the mentioned parameters instantiated accordingly by the tuple $\vect{t}$ ---which we write $(\Delta_{\#})_{\vect{t}}$. 
That a tuple of terms \emph{fits} a telescope has the (obvious) meaning that:
\begin{inparaenum}[a)]
\item The telescopes and the context are well-formed.
\item They are of the same length.
\item Each term has the sort attached to its corresponding parameter, instantiated on the preceding terms in the tuple.
\end{inparaenum}

An equally valid alternative is that the freshness conditions are rather \emph{imposed} by the system, i.e. a freshness declaration is to be interpreted as an \emph{assumption} on part of the user about the employment of names in the (ground) expressions to be generated. The conditions can be imposed by the system by generating in each case a sample chosen among all the alpha-equivalent expressions satisfying the sorting rules that also respects the freshness conditions. For this to work, it is essential that the system is closed under alpha-equivalence ---which  will be shown presently--- and that the freshness conditions are only on bound atoms ---which is already imposed in the well-formation of declarations.

In the rule (constr) we use the notation $\vect{u}_{\vect{t}}$, which stands for the instantiation of the parameters of the tuple of terms $\vect{u}$ with the tuple $\vect{t}$.

Finally, let us remark that, as stated in rule (fun-sig) and (cons-tel), valid telescopes and target sorts of term constructors cannot depend on (unabstracted) atoms. Also note that in the rules we omit premises that can be deduced from some explicitly mentioned premise.

\newcommand{\optional}[1]{}
\def\ScoreOverhang{1pt}
\def\defaultHypSeparation{\hskip 0.08in}
\begin{figure}[h!]
\centering
\begin{subfigure}{\linewidth}
  \centering
\AxiomC{\ }
\LeftLabel{\ruleNameLbl{empty-sig}}
\UnaryInfC{$\sigjdgmnt{\SigEmpty}$}
\DisplayProof
\qquad
\AxiomC{\optional{$\sigjdgmnt{\Sigma}$}}
  \AxiomC{$\teljdgmntsig{}{\Sigma}{\mathscr{T}}$}
  \LeftLabel{\ruleNameLbl{sort-sig}}
  \RightLabel{$\Bigg\{$
    \parbox{1.2in}{$\begin{aligned}
      &\mathcal{F}\not\in\dom{\Sigma}\\
      &\Delta \text{ well-formed} 
    \end{aligned}$}
  }
  \BinaryInfC{$\sigjdgmnt{\SigConsSort{\Sigma}{\mathcal{F}}{\mathscr{T}}{\Delta}}$}
\DisplayProof

\vspace*{8pt}
  \AxiomC{\optional{$\sigjdgmnt{\Sigma}$}}
  \AxiomC{\optional{$\teljdgmntsig{}{\Sigma}{\mathscr{T}}$}}
  \AxiomC{$\clsortjdgmntsig{\mathscr{T}}{}{\Sigma}{\sconstr{\mathcal{F}}{\vect{t}}}$}
  \LeftLabel{\ruleNameLbl{fun-sig}}
  \RightLabel{$\Bigg\{$
      $\begin{array}{l}
        f\not\in\dom{\Sigma}\\
        \Delta \  \text{ well-formed}
      \end{array}$%
  }
  \TrinaryInfC{$\sigjdgmnt{\SigConsFun{\Sigma}{f}{\mathscr{T}}{\sconstr{\mathcal{F}}{\vect{t}}}{\Delta}}$}
  \DisplayProof
  
  \caption{\emph{Rules for signatures.}}
  \label{def:sig-rules}
\end{subfigure}

\vspace*{8pt}
 
\begin{subfigure}{\linewidth}
\centering
\AxiomC{$\sigjdgmnt{\Sigma}$}
\LeftLabel{\ruleNameLbl{empty-tel}}
\UnaryInfC{$\teljdgmntsig{}{\Sigma}{\emptyTel}$}
\DisplayProof
\qquad
\AxiomC{$\clsortjdgmntsig{\mathscr{T}}{}{\Sigma}{\gamma}$}
\LeftLabel{\ruleNameLbl{cons-tel}}
\RightLabel{$X \not\in\dom{\mathscr{T}}$}
\UnaryInfC{$\teljdgmntsig{}{\Sigma}%
{\consTel{\mathscr{T}}{X}{\gamma}}$}
\DisplayProof
  \caption{\emph{Rules for telescope formation.}}
  \label{def:tel-form-rules}
\end{subfigure}

\vspace*{8pt}

\begin{subfigure}{\linewidth}
  \centering
\AxiomC{$\teljdgmntsig{}{\Sigma}{\mathscr{T}}$}
\LeftLabel{\ruleNameLbl{emp-ctx}}
\UnaryInfC{$\ctxjdgmntsig{\mathscr{T}}{}{\Sigma}{\emptyCtx}$}
\DisplayProof
\qquad
\AxiomC{$\sortjdgmntsig{\mathscr{T}}{\Gamma}{}{\Sigma}{\sconstr{\mathcal{F}}{\vect{t}}}$}
  \LeftLabel{\ruleNameLbl{cons-ctx}}
  \RightLabel{$a \not\in\dom{\Gamma}$}
  \UnaryInfC{$\ctxjdgmntsig{\mathscr{T}}{}{\Sigma}{\consCtx{\Gamma}{a}{\sconstr{\mathcal{F}}{\vect{t}}}}$}
  \DisplayProof
\caption{\emph{Rules for well-formed contexts.}}
  \label{def:ctxt-rules}
\end{subfigure}

\vspace*{8pt}

\begin{subfigure}{\linewidth}
\centering
\AxiomC{\optional{$\teljdgmntsig{}{\Sigma}{\mathscr{T}}$}}
\AxiomC{$\telfitsjdgmntsig{\mathscr{T}}{\Gamma}{\Sigma}{\vect{t}}{\mathscr{T}_{\#}'[(\Delta_{\#})_{\vect{t}}]}$}
\LeftLabel{\ruleNameLbl{data}}
\RightLabel{
\begin{minipage}[c]{1.4in}$
  \left\{\begin{array}{l}
     \mathcal{F}\in\dom{\Sigma}\\
     \lookupConstSig{\mathcal{F}}{\Sigma}{\mathscr{T'}\to\data ; \Delta}
     \end{array}\right.$
\end{minipage}
}
\BinaryInfC{$\sortjdgmntsig{\mathscr{T}}{\Gamma}{}{\Sigma}{\sconstr{\mathcal{F}}{\vect{t}}}$}
\DisplayProof
\vspace{8pt}

\AxiomC{$\sortjdgmntsig{\mathscr{T}}{\Gamma}{}{\Sigma}{\sconstr{\mathcal{F}}{\vect{t}}} \ \  \ $}
\AxiomC{$\sortjdgmntsig{\mathscr{T}}{(\Gamma, b :
      \sconstr{\mathcal{F}}{\vect{t}})}{}{\Sigma}{\prmAct{\trans{a}{b}}{\gamma}}$}
 \insertBetweenHyps{\hspace{0pt}}
  \LeftLabel{\ruleNameLbl{abs-*}}
  \RightLabel{\begin{minipage}[c]{1in}
      $\left\{\begin{array}{l}
        b \not\in\dom{\Gamma}\\
        b\frsh \gamma
      \end{array}\right.$
      \end{minipage}
  }
  \BinaryInfC{$\sortjdgmntsig{\mathscr{T}}{\Gamma}{}{\Sigma}{\absort{a : \sconstr{\mathcal{F}}{\vect{t}}}{\gamma}}$}
  \DisplayProof
        
  \caption{\emph{Rules for well-formed sorts.}}\label{def:sort-rules}
\end{subfigure}

\vspace{8pt}

\begin{subfigure}{\linewidth}
\centering
\AxiomC{$\ctxjdgmntsig{\mathscr{T}}{}{\Sigma}{\Gamma}$}
    \LeftLabel{\ruleNameLbl{atm}}
    \RightLabel{$a \in\dom{\Gamma}$}
    \UnaryInfC{$\tyjdgmntsig{\mathscr{T}}{\Gamma}{}{\Sigma}{\atom{a}}{\Gamma(a})$}
    \DisplayProof 
\qquad
\AxiomC{$\ctxjdgmntsig{\mathscr{T}}{}{\Sigma}{\Gamma}$}
    \LeftLabel{\ruleNameLbl{var1}}
    \RightLabel{$X\in\dom{\mathscr{T}}$}
    \UnaryInfC{$\tyjdgmntsig{\mathscr{T}}{\Gamma}{}{\Sigma}{X}{\mathscr{T}(X)}$}
    \DisplayProof 

\vspace{8pt}

\AxiomC{$\tyjdgmntsig{\mathscr{T}}{\Gamma}{}{\Sigma}{\concrVect{X}{t}}{\absort{a :
          \sconstr{\mathcal{F}}{\vect{s}}}{\gamma}}$ \ \ \ }
\AxiomC{$\tyjdgmntsig{\mathscr{T}}{\Gamma}{}{\Sigma}{t'}{\sconstr{\mathcal{F}}{\vect{s}}}$}
    \LeftLabel{\ruleNameLbl{var2}}
    \BinaryInfC{$\tyjdgmntsig{\mathscr{T}}{\Gamma}{}{\Sigma}{\concr{X}{\vect{t},t'}}{\gamma[a\mapsto t']}$}
\DisplayProof

\vspace{10pt}

\AxiomC{$\telfitsjdgmntsig{\mathscr{T}}{\Gamma}{\Sigma}{\vect{t}}{\mathscr{T}_{\#}'[(\Delta_{\#})_{\vect{t}}]}$}
\LeftLabel{\ruleNameLbl{constr}}
\RightLabel{
  \parbox{1.4in}{
    \[\left\{\begin{array}{l}
      f \in \dom{\Sigma}\\
      \lookupConstSig{f}{\Sigma}{\mathscr{T'}\to \sconstr{\mathcal{F}}{\vect{u}} ; \Delta} \\
    \end{array}\right.\]
  }}
\UnaryInfC{$\tyjdgmntsig{\mathscr{T}}{\Gamma}{}{\Sigma}{\sconstr{f}{\vect{t}}}{\sconstr{\mathcal{F}}{(\vect{u}_{\vect{t}})}}$}
\DisplayProof

\vspace{8pt}

\AxiomC{$\sortjdgmntsig{\mathscr{T}}{\Gamma}{}{\Sigma}{\sconstr{\mathcal{F}}{\vect{t}}}$ \ \ \ }
\AxiomC{$\tyjdgmntsig{\mathscr{T}}{(\Gamma, b :
        \sconstr{\mathcal{F}}{\vect{t}})}{}{\Sigma}{\prmAct{\trans{a}{b}}{t}}{\prmAct{\trans{a}{b}}{\gamma}}$}
\LeftLabel{\ruleNameLbl{abs}}
\RightLabel{
      \begin{minipage}[c]{1in}$%
      \left\{\begin{array}{l}
          b \not\in\dom{\Gamma}\\
          b\frsh \{t, \gamma\}
        \end{array}\right.$
      \end{minipage}}
\BinaryInfC{$\tyjdgmntsig{\mathscr{T}}{\Gamma}{}{\Sigma}{\absterm{a :\sconstr{\mathcal{F}}{\vect{t}}}{t}}{\absort{a : \sconstr{\mathcal{F}}{\vect{t}}}{\gamma}}$}
\DisplayProof
    
\vspace{8pt}

\AxiomC{$\tyjdgmntsig{\emptyTel}{\Gamma}{}{\Sigma}{t}{\gamma}$}
\AxiomC{\optional{$\sortjdgmntsig{\emptyTel}{\Gamma}{}{\Sigma}{\gamma'}$}}    \LeftLabel{\ruleNameLbl{conv}}
    \RightLabel{$\gamma \aleq \gamma'$}
\BinaryInfC{$\tyjdgmntsig{\emptyTel}{\Gamma}{}{\Sigma}{t}{\gamma'}$}
\DisplayProof
   
\caption{\emph{Rules for well-sorted terms.}}
\label{def:term-rules}
\end{subfigure}
\caption{Sorting System}
\end{figure}

\subsection*{Properties of the sorting system}

\begin{lemma}[Equivariance of sorting]
\label{sorting-equivariant}
    Let $\mathcal{J}'$ be a permutative variant of the judgement $\mathcal{J}$. If one is derivable, then the other is also derivable.
\end{lemma}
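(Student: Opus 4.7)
The plan is to proceed by simultaneous induction on the derivations of the five forms of judgement. Since permutations are bijections, with $\prmInv{\pi}$ also a permutation, it suffices to show one direction: whenever $\mathcal{J}$ is derivable, so is $\pi \act \mathcal{J}$ for any $\pi$. Here $\pi$ is extended homomorphically to contexts, telescopes and signatures in the obvious way (acting pointwise on the atoms they declare and on the sorts attached to them), and we recall from the well-formation conditions on declarations that telescopes and signatures cannot mention unabstracted atoms, so $\pi$ effectively fixes $\mathscr{T}$ and $\Sigma$ up to $\al$.

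For the bulk of the rules the argument is uniform: each syntactic operation appearing in the conclusion (sort-constructor application, term-constructor application, concretion, substitution of parameters) commutes with the permutation action on the nose, so applying the induction hypothesis to each premise yields exactly the premises of the same rule applied to $\pi \act$ of the conclusion. The side conditions like $\mathcal{F}\in\dom{\Sigma}$, $X\in\dom{\mathscr{T}}$, $f\in\dom{\Sigma}$ are preserved because $\pi$ fixes these objects; the condition $a\in\dom{\Gamma}$ becomes $\pi(a)\in\dom{\pi\act\Gamma}$, which holds since $\pi$ is a bijection. For rules (atm) and (var1), the lookups $\Gamma(a)$ and $\mathscr{T}(X)$ likewise transport correctly.

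The main obstacle, as usual in nominal proofs, will be the rules that introduce a fresh auxiliary atom: (abs-$*$) for sorts and (abs) for terms (and, more mildly, the fresh-renaming bookkeeping implicit in the rule (conv) via $\al$). In (abs) the derivation uses some $b\not\in\dom{\Gamma}$ with $b\frsh\{t,\gamma\}$ together with a premise on $\prmAct{\trans{a}{b}}{t}$ and $\prmAct{\trans{a}{b}}{\gamma}$. Applying the induction hypothesis to that premise with the permutation $\pi$ gives a derivation whose subject is $\pi\act(\prmAct{\trans{a}{b}}{t})=\prmAct{\trans{\pi(a)}{\pi(b)}}{(\pi\act t)}$ and analogously for the sort, which is precisely what the rule (abs) needs once we observe that $\pi(b)$ discharges its side conditions: $\pi(b)\not\in\dom{\pi\act\Gamma}$ because $\pi$ is a bijection, and $\pi(b)\frsh\{\pi\act t,\pi\act\gamma\}$ by the equivariance of freshness (Lemma~\ref{lem:equivariance-alpha-fresh}) applied to $b\frsh\{t,\gamma\}$. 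The (abs-$*$) case is identical, and (conv) closes immediately by the equivariance of $\al$.

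The only routine bookkeeping I expect to need beyond this is that the ``fresh copies'' $\mathscr{T}_{\#}'$ and $\Delta_{\#}$ used in (data) and (constr) can be chosen so as to avoid clashing with the atoms in $\img{\pi}\cup\dom{\pi}$ as well; since the pool of atoms is countably infinite this always succeeds, and then $\pi$ commutes with the parameter instantiations $\vect{u}_{\vect{t}}$ and $(\Delta_{\#})_{\vect{t}}$ because instantiation is pure grafting. With these observations in hand, the induction goes through uniformly and yields $\pi\act\mathcal{J}$ derivable by exactly the same rule used to derive $\mathcal{J}$.
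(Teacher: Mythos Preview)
Your proposal is correct and follows exactly the approach the paper indicates: a direct check by simultaneous induction over all forms of judgement. The paper's own proof is a one-liner (``Direct check on the system by simultaneous induction over all forms of judgments''), so you have simply unfolded the details---particularly the handling of the fresh auxiliary atom in (abs) and (abs-$*$) via equivariance of freshness, and the bookkeeping for $\mathscr{T}_{\#}'$ in (data)/(constr)---that the paper leaves implicit.
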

\begin{proof}
    Direct check on the system by simultaneous induction over all forms of judgments.
\end{proof}
\begin{lemma}[Closure under alpha conversion]
\label{alpha-closure} \hfill
\begin{enumerate}
    \item If $\sortjdgmntsig{\emptyTel}{\Gamma}{}{\Sigma}{\gamma}$
    then for every $\gamma'$ such that $\vdash \gamma' \aleq \gamma$, also 
$\sortjdgmntsig{\emptyTel}{\Gamma}{}{\Sigma}{\gamma'}.$
\item If $\tyjdgmntsig{\emptyTel}{\Gamma}{\Delta}{\Sigma}{t}{\gamma}$ 
then for every $t'$ such that $\vdash t' \aleq t$, also 
$\tyjdgmntsig{\emptyTel}{\Gamma}{\Delta}{\Sigma}{t'}{\gamma}$.
\end{enumerate}
\end{lemma}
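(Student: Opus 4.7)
The plan is to prove both parts simultaneously by induction on the original derivation, case-analysing on the last rule applied, and in each case inverting the $\alpha$-equivalence premise ($\vdash \gamma' \aleq \gamma$ or $\vdash t' \aleq t$) to decompose the alpha-equivalent expression structurally before applying the IHs componentwise. Simultaneous induction is needed because rule (data) depends on a fitness condition involving term-sortings, while rule (abs) depends on well-formedness of an abstraction sort.

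The cases (atm), (data), (constr), (conv), and the same-binder abstraction case via $\alphaRuleName{ab_{aa}}$ are routine. In (constr), the rule applied to the $\alpha$-equivalent arguments $\vect{t'}$ yields sort $\sconstr{\mathcal{F}}{(\vect{u}_{\vect{t'}})}$ rather than the original target $\sconstr{\mathcal{F}}{(\vect{u}_{\vect{t}})}$, and Lemma~\ref{dependent:aleq_sub} shows these are $\alpha$-equivalent, so rule (conv) closes the gap; an analogous use of (conv) handles (data). The (conv) case itself is immediate by transitivity of $\aleq$.

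The delicate case is abstraction combined with the cross-binder rule $\alphaRuleName{ab_{ab}}$. Suppose the derivation ends with (abs) for $t = \absterm{a : \sconstr{\mathcal{F}}{\vect{s}}}{u}$ using some auxiliary $b$ fresh for $\{u, \delta\}$, and $t' = \absterm{c : \sconstr{\mathcal{F}}{\vect{s'}}}{u'}$ is alpha-equivalent to it with $\vdash u \aleq \swap{a}{c}\cdot u'$ and $\vdash a \frsh u'$. I would pick an auxiliary $b'$ fresh for $\Gamma, \vect{s}, \vect{s'}, u, u', \delta$, then use equivariance of sorting (Lemma~\ref{sorting-equivariant}) with swap $\swap{b}{b'}$ to transport the inductive premise to a context involving $b'$; a short permutation calculation, using $\vdash a \frsh u'$ together with the auxiliary fact that swapping two atoms fresh for an expression preserves $\aleq$, then yields $\swap{a}{b'}\cdot u \aleq \swap{c}{b'}\cdot u'$. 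The IH applied to this $\alpha$-equivalence provides the premise needed to fire rule (abs) on $t'$, and (conv) absorbs any residual difference in the target sort. I expect the main obstacle to be the bookkeeping around context decorations: the IH after transport delivers a judgment in context $\Gamma, b' : \sconstr{\mathcal{F}}{\vect{s}}$, whereas (abs) applied to $t'$ requires $\Gamma, b' : \sconstr{\mathcal{F}}{\vect{s'}}$, so either the statement must be strengthened to allow $\alpha$-transport of context decorations, or a small auxiliary lemma to that effect must be proved; in either case this extension is handled by a straightforward induction using the main statement.
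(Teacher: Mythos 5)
Your proposal is correct and follows essentially the same route as the paper, whose proof is only the one-line remark that the result follows ``by induction on the sorting system, essentially using rule (conv) and equivariance'' (\cref{sorting-equivariant}); your simultaneous induction with inversion of $\aleq$, the use of \cref{dependent:aleq_sub} plus (conv) in the (data)/(constr) cases, and the fresh-atom transport via equivariance in the $\alphaRuleName{ab_{ab}}$/(abs) case is exactly the intended argument spelled out in detail. The bookkeeping you flag (renaming the auxiliary atom and $\alpha$-adjusting the context entry $b':\sconstr{\mathcal{F}}{\vect{s}}$ versus $b':\sconstr{\mathcal{F}}{\vect{s'}}$) is a genuine detail the paper leaves implicit, and your plan to handle it by a small auxiliary lemma is adequate.
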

\begin{proof}
By induction on the sorting system, essentially using rule (conv) and equivariance.
\end{proof}
\noindent Since sort inference and alpha-equivalence are decidable, sort-checking is decidable.
\begin{lemma}[Decidability]
\label{decidability}
Given a signature $\Sigma$, telescope $\mathscr{T}$ and context $\Gamma$ valid under $\Sigma$, it is decidable whether $M$ is a sort or a term of some sort, for any $M$.
\end{lemma}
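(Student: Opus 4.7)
The plan is to exhibit a terminating recursive algorithm that, given verified $\Sigma$, $\mathscr{T}$, and $\Gamma$ together with an arbitrary expression $M$, decides whether $M$ is a well-formed sort under these data or infers a sort $\gamma$ such that $M$ is well-sorted of sort $\gamma$. Because the sort-constructor alphabet $\mathbb{C}$, term-constructor alphabet $\mathbb{F}$, atoms, and parameters are pairwise disjoint, the class of $M$ (sort versus term) is itself read off recursively; the only ambiguous shape is an abstraction, whose body is then inspected to decide. The core inference follows the sorting rules case by case: for $\sconstr{\mathcal{F}}{\vect{t}}$ or $\sconstr{f}{\vect{t}}$, look up the constructor in $\Sigma$, recursively infer the sorts of the arguments $\vect{t}$, and verify that they fit the declared telescope while discharging the freshness context $\Delta$; for an abstraction, a deterministically chosen fresh atom $b$ is used to recursively check the body under the extended context, yielding an abstraction sort or type; for a parameter concretion, the algorithm demands that the inferred sort of the head parameter be an abstraction sort and successively consumes concretions via rule (var2), applying the atom substitution of \cref{dependent:atom_substitution}.

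Two decidable sub-procedures are required: freshness checking, immediate from the syntax-directed rules of \cref{dependent:freshness}; and alpha-equivalence, which is likewise syntax-directed (in particular $\alphaRuleName{ab_{ab}}$ is determined because the two abstracted atoms appearing in its conclusion are part of the input, so the swap is fixed). These sub-procedures are invoked whenever rule (conv) is needed: to test whether a term has a specified sort $\gamma$, infer its sort $\gamma'$ once and test $\vdash \gamma' \aleq \gamma$. Because each premise of each sorting rule operates on a strictly smaller expression (up to a renaming permutation, which preserves size by equivariance, \cref{lem:equivariance-alpha-fresh}), the recursion terminates.

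The main obstacle is completeness: rule (conv) is not syntax-directed, so one must show that whenever $\tyjdgmntsig{\emptyTel}{\Gamma}{}{\Sigma}{t}{\gamma}$ is derivable, the inference returns some $\gamma'$ with $\vdash \gamma' \aleq \gamma$. The argument is that every other rule is determined by the outermost form of $t$, hence any derivation may be normalised so that (conv) is used only at the root; this uses \cref{alpha-closure} to propagate alpha-equivalence through the premises and \cref{sorting-equivariant} to absorb the renaming permutations introduced by the abstraction rules. Once this principal-sort property is in place, deciding whether $M$ is a well-formed sort or a well-sorted term reduces to running the inference algorithm and, when a specific target sort is being tested, performing one final alpha-equivalence check, both of which are effective.
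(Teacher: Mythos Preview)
Your proposal is correct and follows the same line the paper takes: the paper's entire justification is the single sentence preceding the lemma, ``Since sort inference and alpha-equivalence are decidable, sort-checking is decidable,'' and your algorithm simply spells out that remark in detail. Your handling of the non-syntax-directed rule (conv) via a principal-sort argument and your termination reasoning are sound elaborations of what the paper leaves implicit.
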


\section{Representation of calculi}

\subsection{First Order Logic}
\label{sec:FOL}

Normally one introduces \emph{first-order} languages, each one determined by a choice of function and predicate symbols.  
This would then call for a specification parameterised on such symbol declarations.
While this kind of parameterisation could be incorporated into the framework, we prefer for now to keep matters simple and provide a representation of \emph{first-order arithmetic}. 
Another choice we make is to represent a \emph{classical} version of the logic.

\subsubsection*{Syntax}

\paragraph{Presentation.}
In the following $x$ represents a variable taken from a denumerable set.
\begin{flalign*}
  t ::=~& x \mid 0 \mid S~t \mid t_1 + t_2 \mid t_1 \times t_2 & \text{Terms} \\
  \varphi ::=~& t_1 = t_2 \mid \bot \mid \neg \varphi \mid \varphi_1 \supset \varphi_2  \mid (\forall x)\varphi & \text{Formulæ} 
\end{flalign*}
The notion of free variable for terms and formulae $fv(t)$ and $fv(\varphi)$ are defined as usual.

Let us call \emph{expressions} (denoted by $e$) either terms or formulae of the first-order language being considered. We write $\equiv$ for identity on terms and formulas. This is the congruent-closure of $\alpha$-conversion, which in turn is the diagonal on terms and defined by the following rule on formulae:
\begin{center}
\AxiomC{$\varphi^x_z \equiv \varphi'^y_z$}
      \RightLabel{$z \notin fv(\varphi)$}
      \UnaryInfC{$(\forall x)\varphi \equiv (\forall y)\varphi'$}
      \DisplayProof    
\end{center}

\noindent Here $\varphi^x_z$ is the \emph{swapping} of the occurrences of $x$ and $z$ in $\varphi$.
Notice $\equiv$ is clearly decidable.  Now define substitution variable by a term on terms and formulas, in the usual way, using the clause:
\[
((\forall x)\varphi) [y := t] \equiv (\forall x)(\varphi [y := t]) \qquad x \notin fv(t \cup \{y\})
\]
\noindent Justification of well-definedness of substitutions is routine, assuming the clause:
\[
\varphi[x := t] \equiv \varphi'[x := t] \ \text{ if } \ \varphi \equiv \varphi'\footnote{This is the same as identifying formulae up to $\alpha$-conversion, or working on $\alpha$-classes.}\]  
This level of detail is usually not reached in textbook presentations, and actually neither in e.g.~\cite{harper:fradl}. Other alternatives could have been chosen – this one is direct and simple enough.

\paragraph{Encoding.}
Clearly, we have two \emph{sorts} of expressions:
\begin{alignat*}{2}
    \Trm~: \dta & \qquad \text{and} \qquad &
  \Frm~: \dta
\end{alignat*}
Look now at terms. Each variable $x_i$ will be encoded as a (distinct) atom $a_i$ (that is to carry the sort \Trm).
Besides, we introduce constructors and operations:
\begin{alignat*}{2}
  0 & : \Trm & \qquad
  S & : (\_:\Trm) \rightarrow \Trm \\
  + & : (\_:\Trm, \_:\Trm) \rightarrow \Trm &
  \times & : (\_:\Trm, \_:\Trm) \rightarrow \Trm
\end{alignat*}
\noindent As usual, we overload the symbols on the object- and meta-levels. As to formulae:
\begin{alignat*}{3}
  = & : (\_:\Trm, \_:\Trm) \rightarrow \Frm &\qquad
  \bot & : \Frm &\qquad
  \neg & : (\_:\Frm) \rightarrow \Frm \\
  \supset & : (\_:\Frm, \_:\Frm) \rightarrow \Frm &
  \forall & : (\absort{\_:\Trm}\Frm) \rightarrow \Frm &
\end{alignat*}
\noindent
Notice that binding in the object language is represented using the abstraction construct of the framework.

\paragraph{Adequacy.}

 Let us call $\Sigma$ the signature just introduced.
For any finite set $\mathcal{X}$ of variables $x_1, \ldots, x_n$, let $\hat{\mathcal{X}}$ be a context of our nominal framework (call this NF) containing assumptions $a_i : \mathsf{Term}$ iff $x_i \in \mathcal{X}$.

\noindent
\begin{lemma}\label{LAdeq:FOL}
There is a compositional bijection\footnote{Compositional means that substitution commutes with encoding, as introduced in \cite{harper:fradl}.} between:
\begin{itemize}
    \item Terms $t$ of the FOL-calculus and terms $\hat{t}$ of the NF such that $\tyjdgmntsig{\emptyCtx}{\widehat{fv(t)}}{\emptyCtx}{\Sigma}{\hat{t}}{\mathsf{Term}}{}$
    \item Formulae $\varphi$ of the FOL-calculus and terms $\hat{\varphi}$ of the NF such that $\tyjdgmntsig{\emptyCtx}{\widehat{fv(\varphi)}}{\emptyCtx}{\Sigma}{\hat{\varphi}}{\mathsf{Form}}{}$
\end{itemize}
\end{lemma}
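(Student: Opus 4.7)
The plan is to define an encoding $\widehat{\cdot}$ and a decoding $\widecheck{\cdot}$ by mutual structural induction, check that they produce well-sorted expressions on the respective sides, show they are mutually inverse modulo the relevant equivalences, and finally verify compositionality with substitution.

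For the encoding, fix once and for all an injection from object variables $x_i$ to distinct atoms $a_i$ (of sort $\mathsf{Term}$), so that $\widehat{\mathcal{X}}$ is the nominal context $a_1:\mathsf{Term},\dots,a_n:\mathsf{Term}$. Then define $\widehat{\cdot}$ compositionally: $\widehat{x_i} = a_i$; $\widehat{0}=0$, $\widehat{S\,t}=S(\widehat t)$, and similarly for $+,\times,=,\neg,\supset,\bot$; and $\widehat{(\forall x)\varphi} = \forall(\absterm{a_x{:}\mathsf{Term}}{\widehat\varphi})$. By induction on $t$ (resp.\ on $\varphi$), using the rules (atm), (constr) and (abs), together with the validity of $\Sigma$ (which follows from \ref{def:sig-rules}), one checks $\tyjdgmntsig{\cdot}{\widehat{fv(t)}}{}{\Sigma}{\widehat{t}}{\mathsf{Term}}$ and $\tyjdgmntsig{\cdot}{\widehat{fv(\varphi)}}{}{\Sigma}{\widehat{\varphi}}{\mathsf{Form}}$. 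The abstraction case is discharged by observing that $a_x \notin \widehat{fv(\varphi)\setminus\{x\}}$, which matches the side condition of (abs).

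Next, define the decoding $\widecheck{\cdot}$ on well-sorted NF terms of sort $\mathsf{Term}$ or $\mathsf{Form}$ by induction, mapping atoms back to the associated variables and NF constructors to their object-language counterparts; for an abstraction $\absterm{a{:}\mathsf{Term}}{u}$ appearing inside $\forall$, pick the variable $x_a$ associated to $a$ and return $(\forall x_a)\widecheck{u}$. One then checks, by parallel induction, that $\widecheck{\widehat{\varphi}} \equiv \varphi$ on the nose and that $\widehat{\widecheck{w}} \aleq w$ for every well-sorted $w$. The key step is the quantifier/abstraction case: $\equiv$ on the FOL side is generated by swapping bound variables subject to a freshness condition, which corresponds exactly to rule $\alphaRuleName{ab_{ab}}$ under the encoding, using equivariance (\cref{lem:equivariance-alpha-fresh}) to transport the swap of atoms. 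To make this a genuine bijection one quotients both sides by the respective equivalences ($\equiv$ on FOL expressions, $\aleq$ on NF terms), and invokes \cref{alpha-closure} to ensure the encoding is well-defined on classes.

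Finally, compositionality is the statement $\widehat{\varphi[x := t]} \aleq \widehat{\varphi}\asubs{a_x}{\widehat{t}}$, which is proved by induction on $\varphi$ (and similarly for terms), using \cref{dependent:aleq_sub} to lift the result through congruent contexts. The interesting case $(\forall y)\varphi$ with $y\ne x$ and $y\notin fv(t)$ matches, under the encoding, the last clause of \cref{dependent:atom_substitution}, where a bound atom is alpha-renamed to a fresh one before the substitution is pushed under the binder; the side-condition $y \notin fv(t)$ on the FOL side corresponds precisely to the freshness hypothesis $\cent c\,\frsh\,\widehat{t}$ needed for the NF clause.

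The main obstacle is this last coordination in the quantifier case: the object-level substitution assumes the bound variable is already fresh for $t$, whereas the NF atom substitution manufactures freshness internally by alpha-renaming. Reconciling the two requires proving that any two choices of fresh names on the two sides yield $\aleq$-equivalent NF results, which is where \cref{alpha-closure}, equivariance, and \cref{dependent:aleq_sub} are combined. Once this coherence lemma is in place, both the bijectivity and the compositionality assertions of the lemma follow directly by the inductions sketched above.
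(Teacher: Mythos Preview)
Your proposal is correct and follows essentially the same route as the paper: define a structural encoding and decoding, verify well-sortedness, mutual inverse modulo $\equiv$/$\aleq$, and compositionality of substitution. You spell out more carefully than the paper the coherence needed in the quantifier case (coordinating the object-level freshness convention with the internal $\alpha$-renaming in \cref{dependent:atom_substitution}), but this is elaboration of the same argument rather than a different strategy.
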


\begin{proof}
First we define encoding \enc\ as follows, by recursion on FOL-terms (we use $\approx$ for term identity in NF, which includes $\aleq$):
\begin{alignat*}{2}
\enc(x_i) & \approx \  a_i     & \qquad \enc(t_1 \ = \  t_2) & \approx \ =(\enc(t_1), \enc(t_2)) \\
\enc(0) & \approx \  0         &    \enc(\bot) & \approx \  \bot \\
\enc(S\,t) & \approx \  S(\enc(t)) & \enc(\neg\,\varphi) & \approx \  \neg(\enc(\varphi)) \\
\enc(t_1 + t_2) & \approx \  +(\enc(t_1), \enc(t_2)) & \enc(\varphi_1 \supset \varphi_2) & \approx \  \supset(\enc(\varphi_1), \enc(\varphi_2)) \\
\enc(t_1 \times t_2) & \approx \  \times(\enc(t_1), \enc(t_2)) & \enc((\forall x_i)\varphi) & \approx \  \forall (\absort{a_i:\Trm}\enc(\varphi))
\end{alignat*}

\noindent
Next, we show that $e_1 \equiv e_2 \iff \emptyCtx \vdash \enc(e_1) \approx \enc(e_2)$
 (notice that \enc\ is ground for every $e$, so there's no need to consider freshness contexts on the right-hand side).

\noindent
Define \dec, the inverse to \enc, as follows:
\begin{alignat*}{2}
\dec(a_i) & \equiv \ x_i & 
    \dec(=(\hat{t}_1, \hat{t}_2)) & \equiv \ \dec(\hat{t}_1) = \dec(\hat{t}_2) \\
\dec(0) & \equiv \ 0 & 
    \dec(\bot) & \equiv \ \bot \\
\dec(S(\hat{t})) & \equiv \ S(\dec(\hat{t})) & 
    \dec(\supset(\hat{\varphi}_1, \hat{\varphi}_2)) & \equiv \ \dec(\hat{\varphi}_1) \supset \dec(\hat{\varphi}_2) \\
\dec(+(\hat{t}_1, \hat{t}_2)) & \equiv \ \dec(\hat{t}_1) + \dec(\hat{t}_2) & 
    \qquad\dec(\forall(\absort{a_i:\Trm}\hat{\varphi})) & \equiv \ (\forall x_i)\ \dec(\hat{\varphi}) \\
\dec(\times(\hat{t}_1, \hat{t}_2)) & \equiv \ \dec(\hat{t}_1) \times \dec(\hat{t}_2) & 
\phantom{2} & \phantom{=} \phantom{b}
\end{alignat*}
\noindent
It follows \enc\ is a bijection with inverse \dec. It is straightforward to prove, by induction on terms and formulae, that the sorting judgments hold using the signature given above.\\

\noindent Compositionality is given by:
$\enc(e[x_i{:=}t]) \approx (\enc(e))[a_i \mapsto \enc(t)]$
\end{proof}

\subsubsection*{Derivations}
\paragraph{Presentation.}
We choose Natural Deduction. Let contexts $\Gamma$ of assumptions be finite sets of formulas. Write $\Gamma, \varphi$ for $\Gamma \cup \{\varphi\}$ with $\varphi \notin \Gamma$, and extend $fv$ to contexts in the obvious way.
\begin{center}
 \centering
 \AxiomC{\ }
   \LeftLabel{(ass)}
   \RightLabel{$\varphi \in \Gamma$}
   \UnaryInfC{$\Gamma \vdash \varphi$}
\DisplayProof
\qquad
\AxiomC{\mbox{  }}
   \LeftLabel{($\rho$)}
   \UnaryInfC{$\Gamma \vdash t = t$}
   \DisplayProof
 \qquad
 \AxiomC{$\Gamma \vdash t_1 = t_2$}
 \AxiomC{$\Gamma \vdash \varphi[x_i := t_1]$}
   \LeftLabel{($\sigma$)}
   \BinaryInfC{$\Gamma \vdash \varphi[x_i := t_2]$}
   \DisplayProof

\hfill

  \AxiomC{}
    \LeftLabel{($+_0$)}
    \UnaryInfC{$\Gamma \vdash 0 + t = t$}
    \DisplayProof
\qquad
  \AxiomC{}
    \LeftLabel{($+_S$)}
    \UnaryInfC{$\Gamma \vdash s(t_1) + t_2 = s(t_1 + t_2)$}
    \DisplayProof

\hfill

 \AxiomC{}
   \LeftLabel{($\times_0$)}
   \UnaryInfC{$\Gamma \vdash 0 \times t = 0$}
   \DisplayProof
\qquad
 \AxiomC{}
   \LeftLabel{($\times_S$)}
   \UnaryInfC{$\Gamma \vdash s(t_1) \times t_2 = (t_1 \times t_2) + t_2$}
   \DisplayProof

\hfill

 \AxiomC{$\Gamma \vdash \varphi[x := 0]$}
 \AxiomC{$\Gamma,\varphi \vdash \varphi[x := s x]$}
 \LeftLabel{(ind)}
 \RightLabel{$x \notin FV(\Gamma)$}
 \BinaryInfC{$\Gamma \vdash \varphi[x := t]$}
 \DisplayProof

\vspace{6pt}

 \AxiomC{$\Gamma \vdash \bot$}
   \LeftLabel{($\bot_e$)}
   \RightLabel{$\varphi \in \Gamma$}
   \UnaryInfC{$\Gamma \vdash \varphi$}
   \DisplayProof
\qquad
 \AxiomC{$\Gamma, \varphi \vdash \bot$}
   \LeftLabel{($\neg$ i)}
   \UnaryInfC{$\Gamma \vdash \neg \varphi$}
   \DisplayProof
 \qquad
 \AxiomC{$\Gamma \vdash \varphi$}
 \AxiomC{$\Gamma \vdash \neg \varphi$}
   \LeftLabel{($\neg$ e)}
   \BinaryInfC{$\Gamma \vdash \bot$}
   \DisplayProof

\vspace{8pt}

 \AxiomC{$\Gamma, \neg \varphi \vdash \bot$}
   \LeftLabel{(RAA)}
   \UnaryInfC{$\Gamma \vdash \varphi$}
   \DisplayProof
\qquad
 \AxiomC{$\Gamma,\varphi_1 \vdash \varphi_2$}
   \LeftLabel{($\supset$i)}
   \UnaryInfC{$\Gamma \vdash \varphi_1 \supset \varphi_2$}
   \DisplayProof
\qquad
 \AxiomC{$\Gamma \vdash \varphi_1\supset \varphi_2$}
 \AxiomC{$\Gamma \vdash \varphi_1$}
   \LeftLabel{($\supset$e)}
   \BinaryInfC{$\Gamma \vdash \varphi_2$}
   \DisplayProof

\vspace{8pt}
 
\AxiomC{$\Gamma \vdash \varphi$}
  \LeftLabel{($\forall$i)}
  \RightLabel{$x_i \notin FV(\Gamma)$}
  \UnaryInfC{$\Gamma \vdash (\forall x_i)\varphi$}
  \DisplayProof
\quad
 \AxiomC{$\Gamma \vdash (\forall x)\varphi$}
   \LeftLabel{($\forall$e)}
   \UnaryInfC{$\Gamma \vdash \varphi[x := t]$}
   \DisplayProof
\end{center}
\paragraph{Encoding.}
Derivations of judgements $\Gamma \vdash \varphi$ can be seen as derivations of judgements $\varphi$ proceeding from assumptions consisting of single formulae, in such a way that the set of assumptions is included in $\Gamma$. We introduce a family of sorts \ 
$\prf: \Frm \to \data$ \ 
 for classifying these latter derivations.

So each derivation of the formula $\varphi$ will be represented as a term in $\prf(\varphi)$. The rules of the system shall be term-formers of these various sorts, in the following way: \begin{inparaenum}[1)]
    \item {Assumptions}: Each assumption of formula $\varphi$ shall be represented as a distinct atom $h:\prf(\varphi)$. These have to be chosen also distinct from the ones representing free variables of the formulae participating in any derivation.
\item{Premises}: are parameters of the corresponding sorts. 
\item{Discharge}: corresponds to binding, i.e., the discharged assumption(s) are bound to the rule(-occurrence) that discharges it (them).
\item{Additional parameters} like terms, must be appropriately declared as parameters of the corresponding constructor. One special case is variables chosen to be replaced (i.e. in rule $\sigma$ or (ind)). These are encoded as atoms possibly appearing in the encoding of the relevant formula. Clearly, in this case the variable is merely a pointer to a place in the formula where to perform the substitution of the relevant term. Accordingly, we encode this phenomenon using our abstraction operator of the framework. That is to say that the relevant formula and the variable in question are encoded as an abstraction.
\end{inparaenum}

In the following encoding, we collapse declarations of parameters of the same sort (so, instead of
writing $t_1 : \Trm, t_2: \Trm$ we write $t_1, t_2 : \Trm$).
\begingroup
\small
\begin{longtable}{ll}
\multicolumn{2}{l}{$\rho : (t : \Trm) \to \prf(=(t,t))$}\\[4pt]
$+_0 : (t : \Trm) \to \prf(=(+(0,t),t))$ & 
$+_S : (t_1, t_2 : \Trm) \to \prf(=(+(S(t_1),t_2)),S(+(t_1,t_2)))$\\[4pt]
$\times_0 : (t : \Trm) \to \prf(=(\times(0,t),0))$ &
$\times_S : (t_1, t_2 : \Trm) \to \prf(=(\times(S(t_1),t_2)),(+(\times(t_1,t_2),t2)))$\\[4pt]
\multicolumn{2}{l}{$\sigma : (\varphi {:} \absort{\_{:}\Trm}\Frm,t_1,\ t_2:\Trm,\_:\prf(=\!(t_1,t_2)),\_ :\prf(\varphi[t_1])) \to \prf(\varphi[t_2])$} \\[4pt]
\multicolumn{2}{l}{$\begin{aligned}\text{ind} : (\varphi {:}\absort{\_{:}\Trm}{\Frm},P0:&\prf(\varphi[0]),\\PS{:}&\absort{x{:}\Trm}{\absort{\_{:}\prf(\varphi[x])}{\prf(\varphi[S(x)])}},t{:}\Trm)
    \to \prf(\varphi[t]) \ ; \ \fresh{x}{\varphi}
\end{aligned}$}\\[6pt]
$\bot_e : (\varphi : \Frm, \_:\prf(\bot)) \to \prf(\varphi)$ &
$\neg_i : (\varphi {:} \Frm,\_:\absort{\_{:}\prf(\varphi)}{\prf(\bot)}) \to \prf(\neg(\varphi))$\\[4pt]
\multicolumn{2}{l}{$\neg_e  : (\varphi : \Frm, \_ :\prf(\varphi), \_ :\prf(\neg(\varphi))) \to \prf(\bot)$}\\[4pt]
\multicolumn{2}{l}{$\supset_i : (\varphi_1, \varphi_2 : \Frm, \_:\absort{\_{:}\prf(\varphi1)}{\prf(\varphi2)}) \to \prf(\supset(\varphi_1,\varphi_2))$}\\[4pt]
\multicolumn{2}{l}{$\supset_e : (\varphi_1, \varphi_2 : \Frm, \_ : \prf(\supset(\varphi_1,\varphi_2)), \_ : \prf(\varphi_1)) \to \prf(\varphi_2)$}\\[4pt]
\multicolumn{2}{l}{$\forall_i  : (\varphi : \absort{\_{:}\Trm}{\Frm},\_ :\absort{x{:}\Trm}{\prf(\varphi[x])}) \to \prf(\forall (\varphi)) \ ; \ \fresh{x}{\varphi}$}\\[4pt]
\multicolumn{2}{l}{$\forall_e  : (\varphi : \absort{\_{:}\Trm}{\Frm}, t : \Trm, \_ : \prf(\forall(\varphi))) \to \prf(\varphi[t])$}
\end{longtable}
\endgroup
\paragraph{Adequacy.}
To prove adequacy of derivations we need to map derivation trees of judgements $\Gamma \vdash \varphi$ to terms  of sort $\prf(\hat{\varphi})$ in the framework. 
These terms are to depend on contexts containing declarations for atoms of sort \Trm\ in $\hat{\varphi}$ (which correspond to free variables in $\varphi$) and for atoms corresponding to the undischarged assumptions of the derivation. 
These, in turn, must be preceded by declarations of atoms of sort \Trm\ that correspond to free variables of the formulas in $\Gamma$.
Besides, the atoms corresponding to the assumptions can be chosen somewhat arbitrarily, as long as they are different enough to ensure well-formation of the context in question.
Thus, for each derivation of judgement $\Gamma \vdash \varphi$ and set of atoms $H$ of size equal to $\Gamma$, we can form a context $\widehat{fv(\Gamma)}, \Gamma_H$,  where $\widehat{fv(\Gamma)}$ is as before (adequacy of terms) and $\Gamma_H$ associates to each $h \in H$ the sort $\prf(\hat{\varphi})$ for $\varphi$ a different formula in $\Gamma$.
We require $H$ disjoint from the set of atoms declared in $\widehat{fv(\Gamma)}$. 
Now we can formulate:

\begin{lemma}
For each set $H$ of atoms such that $|H| = |\Gamma|$, there exists a bijective correspondence between derivations $\delta$ of judgements $\Gamma \vdash \varphi$ in FOL and terms $\hat{\delta}$ of NF such that: \ \tyjdgmnt{\emptyCtx}{\widehat{fv(\Gamma)}}{\emptyCtx}{\hat{\delta}}{\prf(\hat{\varphi})}.
\end{lemma}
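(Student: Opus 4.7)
The plan is to proceed by mutual definition of an encoder $\widehat{(\cdot)}$ from FOL derivations to well-sorted NF terms and a decoder $\dec$ in the reverse direction, both by structural induction, and then verify they are mutually inverse. First I fix a bijection between the enumerated $\Gamma = \{\varphi_1,\dots,\varphi_n\}$ and $H = \{h_1,\dots,h_n\}$, giving the context $\Gamma_H = h_1:\prf(\widehat{\varphi_1}),\dots,h_n:\prf(\widehat{\varphi_n})$, which is appended to $\widehat{fv(\Gamma)}$ exactly as sketched just before the lemma. The (ass) step on $\varphi_i\in\Gamma$ translates to the atom $h_i$; every other FOL rule maps to its eponymous constructor in the signature, with formula and term parameters encoded via \cref{LAdeq:FOL} and sub-derivations recursively. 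Discharging rules ($\neg_i$, $\supset_i$, and (ind) for its inductive hypothesis) introduce a fresh atom $h$ of the appropriate proof sort, recursively encode the sub-derivation under $H\cup\{h\}$, and abstract $h$ in the result. Rules with a free-variable side condition ($\forall_i$, (ind)) additionally abstract a fresh atom playing the role of the quantified object variable.

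The heart of the verification is the sorting judgement for $\widehat\delta$, proved by induction on $\delta$. In each case the target sort of the applied constructor matches the goal, provided we observe that an object-level substitution $\varphi[x_i:=t]$ corresponds, by the compositionality clause of \cref{LAdeq:FOL}, to the framework-level concretion $(\absterm{a_i:\Trm}{\widehat\varphi})[\widehat{t}]$ appearing in the target sorts of $\sigma$, $\forall_e$, and (ind). The freshness side conditions in the signature (patterns $\fresh{x}{\varphi}$, $\fresh{h}{Q}$) are discharged by translating the corresponding side conditions of the FOL rule: for example, $x_i\notin FV(\Gamma)$ in $(\forall i)$ ensures that the chosen atom is fresh in $\widehat\varphi$ relative to $\widehat{fv(\Gamma)}$. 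Rule (conv) combined with closure under alpha-equivalence (\cref{alpha-closure}) and equivariance (\cref{sorting-equivariant}) lets us rename abstracted atoms freely to match canonical forms when needed.

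The inverse $\dec$ is defined by structural induction on well-sorted terms of sort $\prf(\widehat\varphi)$: the top-level head uniquely determines the FOL rule being applied, atoms from $\Gamma_H$ are recognised as assumption uses, and abstractions encode either discharges or bound variables, each to be inverted respectively. Mutual invertibility, $\dec(\widehat\delta)=\delta$ and $\widehat{\dec(M)}\aleq M$, then follows by a routine induction parallel to that used for terms and formulas in \cref{LAdeq:FOL}, with decidability of sort-checking (\cref{decidability}) guaranteeing that the case analysis in $\dec$ is well-defined.

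The main obstacle I expect is the rule (ind), which combines substitution, a discharged hypothesis, and a quantified variable inside the nested abstraction sort $\absort{x:\Trm}{\absort{\_:\prf(\varphi[x])}{\prf(\varphi[S(x)])}}$; checking sort-correctness there requires using compositionality, equivariance, and the freshness side-condition $\fresh{x}{\varphi}$ on the declaration simultaneously, while also tracking that the atom chosen for the discharged assumption is fresh in every sort containing $x$. A secondary subtlety is the parametricity in $H$: distinct choices yield equivariant variants of $\widehat\delta$, so the bijection statement must be read for each fixed $H$, with \cref{sorting-equivariant} connecting the different fibres.
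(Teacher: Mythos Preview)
Your proposal is correct and follows essentially the same approach as the paper: the paper too defines the encoding $\delta\enc_H$ by induction on derivations (giving an explicit table of clauses that matches your description rule-by-rule), checks well-sortedness using compositionality of \cref{LAdeq:FOL} with particular attention to the (ind) case, and then sketches the inverse via a decoding map. Your identification of (ind) as the crux and of compositionality as the key lemma is exactly what the paper relies on.
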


\noindent
Let us make two remarks. First, we are still considering only ground terms of the framework.
Second,  we consider identity of derivations in FOL to be given freely by the rules-as-constructors.

\begin{proof}
By induction on derivations.
We define $\hat{\delta}$ as $\delta$\enc\ with parameter $H$ as follows:
\begingroup
\small
\begin{longtable}{llr}
\toprule
Derivation, $\delta$ & Encoding, $\deH(\delta)$ & Comment\\
\midrule
   \AxiomC{}
    \LeftLabel{(ass)}
    \UnaryInfC{$\Gamma \vdash\varphi$}
    \DisplayProof & 
    $h$\text{ where }
    $(h:\prf(\hat{\varphi}) \in \Gamma_H)$&\\[4pt]
  \AxiomC{}
   \LeftLabel{($\rho$)}
    \UnaryInfC{$\Gamma \vdash t=t$}\DisplayProof & 
    $\rho(\hat{t})$ & \\[4pt]
  \AxiomC{$\overset{\delta_1}{\Gamma \vdash t_1 = t_2}$}
  \AxiomC{$\overset{\delta_2}{\Gamma \vdash \varphi[x_i := t_1]}$}
  \LeftLabel{($\sigma$)}
  \BinaryInfC{$\Gamma \vdash \varphi[x_i := t_2]$}\DisplayProof & 
  \multicolumn{2}{l}{$
      \sigma(\absort{x{:}\Trm}{\hat{\varphi}},\ \hat{t_1},\ \hat{t_2},
      \deH(\delta_1), \deH(\delta_2))$}
  \\[14pt]
  \AxiomC{ }
  \LeftLabel{($+_0$)}
  \UnaryInfC{$\Gamma \vdash 0+t=t$} \DisplayProof
  &
  $+_0(\hat{t})$ & \\[-12pt]
  & \multicolumn{2}{r}{
  similarly for $+_S,\times_0,\times_S$}\\[10pt]
  \AxiomC{$\overset{\delta_0}{\Gamma \vdash \varphi[x:=0]}$}
  \AxiomC{$\overset{\delta_1}{\Gamma,\varphi \vdash \varphi[x:=S(x)]}$}
  \LeftLabel{(ind)}
  \BinaryInfC{$\Gamma \vdash \varphi[x:=t]$}
  \DisplayProof&
  $\begin{aligned}
      \texttt{ind}(&\absort{a_i{:}\Trm}{\hat{\varphi}},\deH(\delta_0),\\
      &\absort{a_i{:}\Trm}{\absort{h{:}\prf(\hat{\varphi})}{}}\delta\enc_{H\cup\{h\}}(\delta_1),t)
  \end{aligned}$ &
  $h\not\in H$\\[14pt]
  \AxiomC{$\overset{\delta}{\Gamma \vdash \bot}$}
  \LeftLabel{($\bot_e$)}
  \UnaryInfC{$\Gamma \vdash \varphi$}
  \DisplayProof&
  $\bot_e(\hat{\varphi},\deH(\delta))$ & \\[-20pt] 
  \multicolumn{3}{r}{similarly for $\neg_e, \supset_e$}\\[4pt]
  \AxiomC{$\overset{\delta}{\Gamma,\varphi \vdash \bot}$}
  \LeftLabel{($\neg_i$)}
  \UnaryInfC{$\Gamma \vdash \neg\varphi$}
  \DisplayProof&
  $\neg_i(\hat{\varphi},\absort{h{:}\prf(\varphi)}{\delta\enc_{H\cup\{h\}}(\delta)})$
  &$h \notin H$\\[-8pt]
  \multicolumn{3}{r}{similarly for $\text{RAA},\supset_i$}\\
  \AxiomC{$\overset{\delta}{\Gamma \vdash \varphi}$}
  \LeftLabel{($\forall_i$)}
  \RightLabel{$x_i\notin fv(\Gamma)$}
  \UnaryInfC{$\Gamma \vdash (\forall x_i)\varphi$} 
  \DisplayProof & \multicolumn{2}{l}{
  $\forall_i(\absort{a_i{:}\Trm}{\hat{\varphi}},\absort{a_i{:}\Trm}{\deH(\delta)})$}
  \\[12pt]
  \AxiomC{$\overset{\delta}{\Gamma \vdash (\forall x_i)\varphi}$}
   \LeftLabel{($\forall_e$)} 
  \UnaryInfC{$\Gamma \vdash \varphi[x_i:=t]$}
  \DisplayProof
  & $\forall_e(\absort{a_i{:}\Trm}{\hat{\varphi}}, \hat{t},\deH(\delta))$&\\
  \bottomrule
\end{longtable}
\endgroup
\noindent
Now we show that the typing restriction holds in the framework. We do it for two interesting cases, namely assumption and induction:
\begin{enumerate}
\item[(ass)]We need to show $\widehat{fv(\Gamma)}, \Gamma_H \vdash h : \prf(\hat{\varphi})$. But $h :   \prf(\hat{\varphi})$ is in $\Gamma_H$ by construction, and the whole context is well-formed by  construction too.
\item[(ind)] First of all, \absort{a_i{:}\Trm}{\hat{\varphi}} is of sort \absort{a_i{:}\Trm}{\Frm}.
    Also, by induction hypothesis, $\deH(\delta_0)$ is of sort $\prf(\widehat{\varphi[x_i := 0]})$ 
    under $\widehat{fv(\Gamma)},\Gamma_H$. 
    We get  $\widehat{\varphi[x:=0]} \approx \hat{\varphi}[a\mapsto 0] 
    \approx (\absterm{a_i{:}\Trm}{\hat{\varphi}})[0]$ by Lemma~\ref{LAdeq:FOL}. Therefore the second argument for the constructor \texttt{ind} is well-typed. Next, let us check that
    $\absort{a_i{:}\Trm}{\absort{h{:}\prf(\hat{\varphi})}{\delta\enc_{H\cup\{h\}}(\delta_1)}}$, has sort
    $\absort{a_i{:}\Trm}{\absort{h{:}\prf(\hat{\varphi})}{\prf(\widehat{\varphi[x_i:=S(x_i)]})}}$.
    Then we know both $\hat{\varphi} \ \approx \ (\absort{a_i{:}\Trm}{\hat{\varphi}})[a_i]$ and
    $\widehat{\varphi[x_i := S(x_i)]} \ \approx \ 
    \hat{\varphi}[a_i\mapsto S(a_i)] \ \approx \ 
    (\absort{a_i{:}\Trm}{\hat{\varphi}})[S(a_i)]$, as desired.
\end{enumerate}
Clearly, the mapping is one-to-one. The converse is shown by defining 
a decoding mapping. The idea is to map a term $\hat{\delta}$ such that
$\tyjdgmnt{\emptyCtx}{\widehat{fv(\Gamma)}}{\emptyCtx}{\hat{\delta}}{\prf(\hat{\varphi})}$
for some $\hat{\varphi}$ such that 
$\tyjdgmnt{\emptyCtx}{\hat{\Gamma}}{\emptyCtx}{\hat{\varphi}}{\Frm}$, 
to a derivation of $\Gamma \vdash \varphi$ for appropriate $\Gamma$.
\end{proof}

\subsection{Lambda Calculi}
\label{sec:lambda-shallow}
\subsubsection*{The Pure Calculus}

Here we give a shallow encoding (see the deep version below).
We have to introduce a sort $\Lambda: \dta$ for terms,
after which we get the atoms, which in the present version will represent the usual variables of the calculus. Then it remains to declare:
\[@  : (\_ : \Lambda,\; \_ : \Lambda) \rightarrow \Lambda \ \text{ (for application) \  
and \ } \lambda  : (\_ : \absort{\_{:}\Lambda}{\Lambda}) \rightarrow \Lambda \ \text{ (for functional abstraction).}\]

\noindent
Note that $\beta$-contraction  of redexes, usually denoted by $\triangleright$, is a binary relation to be encoded as:
\[
\triangleright : (\_ : \Lambda,\; \_ : \Lambda) \rightarrow \dta
\]
\noindent
with one rule: \hspace*{6em}
$\beta : (B : \absort{\_{:}\Lambda}{\Lambda},N:\Lambda) \rightarrow \triangleright(@(\lambda(B), N), B[N])$,

\noindent
using the generalised concretion of the framework.
\paragraph{Adequacy.} To show the correctness of our encoding of the $\lambda$-calculus, we need to prove that we can map $\beta$-reductions $s \rightarrow_\beta t$ in the $\lambda$-calculus to terms of type \ $\triangleright(\enc(s),\enc(t))$ in our framework.

\begin{lemma}
If $s,t$ are $\lambda$-terms such that $s \rightarrow_\beta t$, i.e., $s \equiv (\lambda x.s_1)s_2$ and $t \equiv s_1[x:=s_2]$ then 
\\
$\vdash \beta(\absort{a_i{:}\Lambda}{\enc(s_1)} ,\enc(s_2)): \triangleright(@(\lambda(\absort{a_i{:}\Lambda}{\enc(s_1)}), \enc(s_2)), (\absort{a_i{:}\Lambda}{\enc(s_1)})[\enc(s_2)]  $.
\end{lemma}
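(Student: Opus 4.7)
The plan is to verify that the term $\beta(\absort{a_i{:}\Lambda}{\enc(s_1)},\enc(s_2))$ fits the telescope of the $\beta$-constructor declared above, and then observe that the resulting sort reduces, via the definition of generalised concretion, to the sort claimed in the statement (which in turn agrees with $\triangleright(\enc(s),\enc(t))$ up to $\aleq$, modulo the visible typo in the right parenthesis of the statement).

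First I would establish the analogue of \cref{LAdeq:FOL} for the pure $\lambda$-calculus: the encoding $\enc$ is a compositional bijection between $\lambda$-terms $u$ with free variables in $\{x_{i_1},\dots,x_{i_n}\}$ and terms $\hat{u}$ of the nominal framework such that $\tyjdgmntsig{\emptyCtx}{\widehat{fv(u)}}{\emptyCtx}{\Sigma_\Lambda}{\hat{u}}{\Lambda}$. Compositionality, proved by straightforward induction on $u$, gives
\[
\enc(u[x_i:=v])\ \aleq\ \aSubAct{\enc(u)}{[a_i\mapsto\enc(v)]}.
\]
From this, $\enc(s_1)$ and $\enc(s_2)$ are well-sorted at $\Lambda$ in appropriate atom contexts, and by rule (abs), taking $b$ fresh and applying equivariance as needed, $\absterm{a_i{:}\Lambda}{\enc(s_1)}$ has sort $\absort{a_i{:}\Lambda}{\Lambda}$, which (since $a_i\frsh\Lambda$) coincides with $\absort{\_{:}\Lambda}{\Lambda}$ up to $\aleq$.

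Next I would apply the rule (constr) to the declaration of $\beta$. Its telescope is $(B:\absort{\_{:}\Lambda}{\Lambda},\,N:\Lambda)$ with empty freshness context, and the pair $(\absterm{a_i{:}\Lambda}{\enc(s_1)},\ \enc(s_2))$ fits this telescope by the preceding step. The rule (constr) then yields
\[
\vdash\ \beta\bigl(\absterm{a_i{:}\Lambda}{\enc(s_1)},\enc(s_2)\bigr)\ :\ \triangleright\bigl(@(\lambda(B),N),\,\concr{B}{N}\bigr)_{(B,N):=(\absterm{a_i{:}\Lambda}{\enc(s_1)},\enc(s_2))},
\]
which, after the parameter instantiation, is
\[
\triangleright\bigl(@(\lambda(\absterm{a_i{:}\Lambda}{\enc(s_1)}),\enc(s_2)),\,\concr{(\absterm{a_i{:}\Lambda}{\enc(s_1)})}{\enc(s_2)}\bigr).
\]
This is exactly the sort asserted in the statement.

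To finish, I would connect this sort to $\triangleright(\enc(s),\enc(t))$. By the definition of concretion,
\[
\concr{(\absterm{a_i{:}\Lambda}{\enc(s_1)})}{\enc(s_2)}\ =\ \aSubAct{\enc(s_1)}{[a_i\mapsto\enc(s_2)]},
\]
and by the compositionality property established above this is $\aleq$-equal to $\enc(s_1[x:=s_2])=\enc(t)$; moreover $@(\lambda(\absterm{a_i{:}\Lambda}{\enc(s_1)}),\enc(s_2))\ \aleq\ \enc(s)$ directly from the definition of $\enc$. Closure of sorting under $\aleq$ (\cref{alpha-closure}), together with rule (conv), then lets us rewrite the resulting sort as $\triangleright(\enc(s),\enc(t))$. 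The only delicate point, and the one I expect to take real care with, is the variable convention for the bound atom $a_i$: one has to check that the atom chosen to represent $x$ in $\absterm{a_i{:}\Lambda}{\enc(s_1)}$ is fresh for $\enc(s_2)$ and for the ambient atom context, using \cref{lem:equivariance-alpha-fresh} to swap to a suitable fresh representative whenever necessary. Once the naming is set up correctly, everything else is a direct application of the typing rules and of compositionality.
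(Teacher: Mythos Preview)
The paper states this lemma without proof, so there is nothing to compare against; your proposal is correct and is exactly the natural argument: check that the pair $(\absterm{a_i{:}\Lambda}{\enc(s_1)},\enc(s_2))$ fits the telescope of the $\beta$ declaration and apply rule (constr), with the analogue of \cref{LAdeq:FOL} for $\Lambda$ supplying the well-sortedness of the arguments and compositionality giving the further identification with $\triangleright(\enc(s),\enc(t))$. The final paragraph about the choice of $a_i$ and the appeal to equivariance and \cref{alpha-closure} is the right way to handle the naming hygiene, and is the only place any real work is needed.
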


\subsubsection*{The Simply Typed Calculi}

We give a system in Church's monomorphic style. We introduce a sort $\LATy : \dta$ for types with constructors:
\begin{align*}
\iota &: \LATy &  \text{— any ground type}\\
\Rightarrow &: (\_ : \LATy,\_ : \LATy) \rightarrow \LATy & \text{— the functional types}
\end{align*}
Typed terms is a sort family $\LTrm : (\_ : \LATy) \rightarrow \dta$ indexed by types, with
constructors:
\begin{align*}
@ &: (\alpha : \LATy, \beta : \LATy, \_ : \LTrm(\Rightarrow(\alpha, \beta)),\_ : \LTrm(\alpha)) \rightarrow \LTrm(\beta)
\end{align*}
\begin{align*}
\lambda &: (\alpha:\LATy, \beta:\LATy,\_:\absort{\_{:}\LTrm(\alpha)}{\LTrm(\beta)}) \rightarrow \LTrm(\Rightarrow(\alpha, \beta))
\end{align*}
Contraction gets typed (because generalized concretion of the framework preserves typing):
\begin{align*}
\triangleright &: (\beta : \LATy, \_ : \LTrm(\beta), \_ : \LTrm(\beta)) \rightarrow \dta\\
\beta &: (\alpha : \LATy, \beta : \LATy, B : \absort{\_{:} \LTrm(\alpha)}{\LTrm(\beta)},\, N : \LTrm(\alpha)) \\
    & \quad \rightarrow \triangleright(\beta,@(\alpha,\beta,\lambda(\alpha,\beta,B),N), B[N])
\end{align*}

There is a difficulty with encoding Curry’s style system --- which has to do with the representation of contexts assigning types to variables. Indeed, since variables in our encoding are atoms of sort $\Lambda$ in the framework context, we would need to somehow zip the context to a list of $\LATy$s. This seems to call for a deeper embedding.

Given the encoding of first-order logic and the lambda-calculus, it is not surprising that the system can also encode Higher-Order Logic (omitted due to lack of space). It is worth mentioning that also versions of dependently typed lambda calculi can be represented.

\subsubsection*{Deep Embeddings}
\label{sec:deep-embeddings}
We show an alternative encoding of the pure lambda-calculus, which corresponds to a so-called \emph{deep} embedding, in contrast to the one given at the beginning of this section. The difference has to do fundamentally with the status of \emph{substitution}. Above we have used the atom substitution in the concretion operator of the framework to directly implement the object language substitution, whereas in the approach to be now considered, the latter is given an alpha-recursive characterisation.

First we define the set of names without constructors, i.e., only inhabited by atoms, and
the set of terms:
\[\typing{\name}{\sort} \quad \text{ and }\quad \typing{\Lambda}{\sort}
\]
\noindent{The constructors are defined by:}
\begingroup
\small
\[
\fconstrdecl{\mathsf{Var}}{(\typing{\_}{\name})}{\Lambda}{}\
\quad \fconstrdecl{@}{(\typing{\_}{\Lambda},\typing{\_}{\Lambda})}{\Lambda}{} 
\quad \fconstrdecl{\lambda}{(\typing{\_}{\dabsort{x}{\name}{\Lambda}})}{\Lambda}{}
\]
\endgroup

Induction over lambda-terms is declared as follows:
\[
\begin{aligned}\fconstrdecl{\Lambda_{\mathsf{ind}}}
{(\typing{&P}{\dabsort{t}{\Lambda}{\form}},\\
&\typing{\_}{\dabsort{x}{\name}{\prf{(\concr{P}{\mathsf{Var}(x)})}}},\\
&\typing{\_}{\dabsort{m}{\Lambda}
            {\dabsort{n}{\Lambda}
            {\dabsort{h_1}{\prf(\concr{P}{m})}}
            {\dabsort{h_2}{\prf(\concr{P}{n})}}
            \concr{P}{@(m,n)}
            }},\\
&\typing{\_}{\dabsort{m}{\Lambda}
            {\dabsort{h}{\prf{(\concr{P}{m})}}
            {\prf{(\concr{P}{\lambda(\dabsort{z}{\name}{m})})}}
            }},\\
&\typing{M}{\Lambda}\\
&)}{\prf{(\concr{P}{M}})}{\fresh{z}{P}}
\end{aligned}
\]

Notice the analogy with the principle of alpha-structural induction formulated in~\cite{pitts:alpsri}. In a similar manner,
substitution can be declared as:
\newcommand{\Lamsubst}{\Lambda_{\mathsf{subst}}}
\[\fconstrdecl{\mathsf{\Lamsubst}}{(\typing{\_}{\Lambda},\typing{\_}{\name},\typing{\_}{\Lambda})}{\Lambda}{}\]
and axiomatised by the following equations (where $\Lamsubst(M, x, N)$ stands for  $M[x := N]$):
\begingroup
\small
\[
\begin{aligned}
\fconstrdecl{{\Lamsubst}_{\mathsf{-v}}}{&(\typing{X}{\name},\typing{N}{\Lambda})}
{\prf{(\mathsf{\forall}(
                         \dabsort{x}{\name}
                         {=( 
                             \mathsf{\Lamsubst}(\mathsf{Var}(x), X, N),
                             \mathsf{if}(\mathsf{==}(x, X),N,\mathsf{Var}(x))
                            )}))}}
{}    
\\
\fconstrdecl{{\Lamsubst}_{-@}}{&(\typing{M_{1}}{\Lambda}, \typing{M_{2}}{\Lambda}, \typing{X}{\name},\typing{N}{\Lambda})}
{\prf{(=(\mathsf{\Lamsubst}(@(M_{1}, M_{2}) , X, N),
                   @(\mathsf{\Lamsubst}(M_{1}, X, N), 
                     \mathsf{\Lamsubst}(M_{2}, X, N))   
                  ))}}
{}
\\
\fconstrdecl{{\Lamsubst}_{-\lambda}}{&(\typing{M}{\Lambda}, \typing{X}{\name},\typing{N}{\Lambda})}
{\prf{(=(\mathsf{\Lamsubst}(\lambda(\dabsort{x}{\name}{M}), X, N),
\lambda(\dabsort{x}{\name}{\mathsf{\Lamsubst}(M, X, N))}}}
{\fresh{x}{\{N, X\}}}.
\end{aligned}
\]
\endgroup

Here an operator $\mathsf{==}$ is used to compare names in \name, which yields a boolean in the obvious manner. The set of booleans with the operator $\mathsf{if}$ is introduced as usual.
With these declarations, we have been able to construct a derivation for the substitution lemma

$(M[x := N])[y := P] = (M[y := P])[[x := N[y := P]]$ if $x \not = y$ and $\fresh{x}{P}$,

\noindent which proceeds by (alpha-structural) induction on $M$ in very much the same way as a pencil-and-paper proof utilising Barendregt's variable convention~\cite{BarendregtH:lamcss}.

\section{Conclusions and Related Work}
\label{section:related-work}

One of the best known examples of logical frameworks is LF~\cite{harper:fradl}, based on a typed $\lambda$-calculus with
dependent types. 
Several proof assistants based on the use of Higher-Order Abstract Syntax to encode binders have been implemented (e.g., Beluga~\cite{Beluga}).
Nominal type theory as a basis for logical frameworks has been investigated independently by Cheney~\cite{CheneyJ:sntt,CheneyJ:depntt} and Pitts~\cite{PittsA:deptta} as extensions of a typed $\lambda$-calculus with names, name-abstraction and concretion operators, and name-abstraction types. 
Although the extension with nominal features of a dependently typed $\lambda$-calculus yields a powerful type theory,  the interaction between name abstraction and functional abstraction is a source of difficulties (see~\cite{CheneyJ:sntt} for a detailed discussion). 

We have shown that despite its first-order character, our dependently sorted system can yield a logical framework where standard languages with binders can be defined and reasoned about. For example, we have shown how to define an induction principle for $\lambda$-terms,  taking into account the  $\alpha$-equivalence relation.  
Also, the first-order character of the language, which is a consequence of the restriction for atoms to carry only data sorts, permits a simple definition of computation, actually consisting in simple syntactic definition at the meta-level. This alleviates somewhat the notions and proofs of adequacy, as compared e.g. to \cite{harper:fradl}, just as happens with the lambda-free frameworks \cite{AdamsR:lamfree}.

Cartmell's Generalised Algebraic Theories (GAT)~\cite{Cartmell86,SterlingJ:algttu} also include dependent sorts but lack any intrinsic binding structure. 
To facilitate the specification of languages with binders  second-order versions of GATs have been proposed~\cite{Fiore08,UemuraPhD,KaposiX24}, which incorporate binding and capture-avoiding substitution using free algebras with substitution structure as a model. We adopt a nominal approach: our dependently sorted system can be seen as an extension of GATs with nominal features, such as notions of fresh names and name abstraction, as well as name permutations and capture-avoiding substitutions. 

The sorting system has some limitations that we will address in future work, such as the fact that in declarations for term and type constructors the variables cannot have sorts that depend on atoms, which would be useful to define recursor operators  on $\lambda$-terms. Allowing for variables depending on terms will also permit to use them to 
represent goals to be solved in incomplete terms, as well as schematic derivations.
We also assign great importance to the goal of extending the present framework with recursive definitions to be used as rules of computation, as in Martin-L\"of's type theory. This would conduct us to a version of this theory fully founded upon a nominal syntax, with the issues brought about by binding solved at an infrastructure level.

\subsection*{Acknowledgments} We thank three anonymous referees for their
comments that helped us to improve this final version. This work was partially
funded by Agencia Nacional de Investigación e Innovación (ANII) and SECyT-UNC.

\bibliographystyle{eptcs}
\bibliography{nominal}

\end{document}